\documentclass[11pt,a4paper]{article}
\usepackage[margin=1in]{geometry}
\usepackage[T1]{fontenc}
\usepackage{lmodern,microtype}
\usepackage{amsmath,amssymb,amsthm,mathtools,booktabs,array}
\usepackage{placeins,needspace}
\usepackage{tikz}
\usepackage{algorithm}
\usepackage[noend]{algpseudocode}
\usetikzlibrary{arrows.meta,positioning,calc,decorations.pathreplacing,fit,backgrounds}
\usepackage{url}
\usepackage[colorlinks,linkcolor=blue!55!black,citecolor=blue!55!black,urlcolor=blue!55!black]{hyperref}
\newtheorem{theorem}{Theorem}[section]
\newtheorem{lemma}[theorem]{Lemma}
\newtheorem{proposition}[theorem]{Proposition}

\theoremstyle{definition}

\newcommand{\OPT}{\operatorname{OPT}}
\newcommand{\poly}{\operatorname{poly}}
\newcommand{\one}{\mathbf 1}
\newcommand{\ceil}[1]{\left\lceil #1\right\rceil}
\newcommand{\floor}[1]{\left\lfloor #1\right\rfloor}
\title{An EPTAS for Vector Scheduling with Time Intervals}
\author{Junho Hwang\\Independent Researcher}
\date{}
\hypersetup{pdftitle={An EPTAS for Vector Scheduling with Time Intervals},pdfauthor={Junho Hwang}}
\begin{document}
\maketitle
\begin{abstract}
We study vector scheduling in which each job is active during a fixed time
interval. A job uses several resources and stays on one machine for its
entire interval; its resource requirements may depend on the machine. The
objective is to minimize the largest resource load over all machines and
times. For $r$ machines and $d$ resources, we give a deterministic
$(1+\varepsilon)$-approximation in $f(r,d,1/\varepsilon)N^{O(1)}$ time,
where $N$ is the binary input length. This gives an efficient
polynomial-time approximation scheme for fixed $r$ and $d$, extending
approximation schemes for scalar temporary tasks assignment. The algorithm
merges jobs into blocks whose time intervals are fixed before any machine
is chosen, and assigns the blocks by dynamic programming over a balanced
recursive split of the time line. We also prove strong NP-hardness and an
exponential lower bound in $1/\varepsilon$ under the Exponential Time
Hypothesis, already for two identical machines and one resource.

\end{abstract}
\section{Introduction}\label{sec:introduction}
Vector scheduling models jobs that use several resources, such as CPU and
memory~\cite{ChekuriKhanna2004,BansalOosterwijkVredeveldZwaan2016}.
When each job reserves those resources for a specified time interval,
its contribution begins at arrival and ends at departure. We assign every
job to one machine for its entire interval and minimize the largest load
of any resource at any time. Resource requirements may depend on the
machine, and jobs may have specified sets of eligible machines.
Figure~\ref{fig:model} gives a two-resource example.

The single-resource case is \emph{offline temporary tasks
assignment}~\cite{AzarRegev1999,Azar2002,Armon2003}. Azar, Regev, Sgall,
and Woeginger (ARSW)~\cite[Theorem~4.4]{Azar2002} gave a polynomial-time
approximation scheme (PTAS) on $r$ identical machines, with running time
$n^{O(r^3\log r/\varepsilon^3)}$ for $n$ jobs. Armon et
al.~\cite[Section~4.1]{Armon2003} obtained a PTAS on a fixed number of
unrelated machines. We give an efficient PTAS (EPTAS) for the multi-resource
model: the exponent of the input length is independent of the machine
count, the number of resources, and the accuracy.

\begin{theorem}\label{thm:main}
For every $r,d\ge1$ and rational $0<\varepsilon\le1$, vector scheduling
with fixed time intervals on $r$ unrelated machines and $d$ resources
admits a deterministic $(1+\varepsilon)$-approximation in
$f(r,d,1/\varepsilon)N^{c_0}$ bit operations. Here $N$ is the binary input
length, $f$ is computable, and $c_0$ is an absolute constant. Arbitrary
nonnegative rational demands and nonempty eligibility sets are allowed.
\end{theorem}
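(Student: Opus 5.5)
We follow the outline in the abstract: guess the optimum scale, classify jobs, merge small jobs into blocks whose intervals are fixed independently of the machine choice, and then assign blocks by a dynamic program over a balanced recursive split of the time line. First we reduce to a decision version. By binary search over a geometric grid of candidate values $T$, with $\log$ of the bit length many probes, we may assume a target $T$ with $\OPT\le T$. We then either produce a schedule of makespan at most $(1+O(\varepsilon))T$ or certify $\OPT>T$. We scale so that $T=1$, discard jobs that have a demand exceeding $1$ on a machine (making that machine ineligible for them), and round each demand coordinate $p_{j,i,k}$ down to a multiple of $\delta=\varepsilon^{c}/(rd)$ or to zero. Since time is continuous, only the $2n$ interval endpoints matter, so we discretize the time line to these event points.

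Next we classify jobs. A job is \emph{big} on machine $i$ if some coordinate exceeds $\delta$, and \emph{small} otherwise. At any time $t$, each machine carries at most $d/\delta$ big jobs in a solution of value $\le1$, so at most $rd/\delta$ big jobs are active at $t$ overall. The block construction handles the small jobs. We sweep the time line and greedily group small jobs of similar type into blocks. The type records the rounded demand vector profile across all $r$ machines and the eligibility set, so there are $f(r,d,\varepsilon)$ types. A block of a type is a set of jobs whose intervals are nested or overlapping around a common time window, with total demand about $\delta$ per coordinate. Each block's interval is defined as a union or bounding window fixed before any assignment. The key structural lemma to prove is the following: there is a near-optimal solution in which every block goes entirely to one machine, and the extra load from replacing jobs by blocks is at most $O(\varepsilon)$ at every time. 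We prove it by an exchange argument in the spirit of ARSW. At each time, the small jobs of a given type that $\OPT$ assigns to each machine can be re-ordered by departure time and cut into fractional shares. Rounding loses at most one block per type per machine per moment, which costs $\delta\cdot\#\text{types}\le\varepsilon$ after choosing $c$ large.

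After this step, every time point sees at most $K=f(r,d,\varepsilon)$ active items, counting big jobs and blocks. For the dynamic program, we build a balanced binary tree over the event points of depth $O(\log n)$. Each node corresponds to a time segment. For each node we consider the items whose intervals cross its midpoint but are contained in the segment. These items are alive at a common moment, so there are at most $K$ of them. Their assignment has at most $r^K$ choices. The DP processes nodes bottom-up and combines children. A naive state would record the assignments of all items crossing the segment boundary. Such items are also bounded by $K$ at each of the two boundary points, so the table for each node has size $r^{2K}$ or less. Feasibility at each event point is checked locally: the load at $t$ depends only on the items active at $t$, and those are exactly the items crossing $t$, which are all known at the leaf containing $t$ from the ancestor states. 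The total work is $r^{O(K)}\cdot n\cdot\poly(N)$, which is EPTAS form. This requires that the exponent does not pick up the depth $\log n$. We ensure this because each node's state only concerns the $O(K)$ items crossing its boundaries, not its ancestors. Unrelated demands and eligibility are absorbed into the types and the per-item machine choices.

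The main obstacle is the block lemma: making block intervals independent of machine choice while keeping the per-time overflow at $O(\varepsilon)$ simultaneously for all $d$ coordinates and all $r$ machines with machine-dependent demands. We would first try grouping by full cross-machine rounded type and sorting by (arrival, departure) within type. If the nested-interval exchange fails, the fallback is to apply the grouping separately within each level of the recursive time split.
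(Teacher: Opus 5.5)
\textbf{Gap: the reduction to a bounded number of active items fails.} You assume that after merging, every time point sees at most $K=f(r,d,\varepsilon)$ active items. This cannot hold for blocks whose jobs share a common window, which is your construction and the paper's, so the table bound $r^{2K}$ and the claim that the depth stays out of the exponent both fail.

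Take two identical machines, one resource, even $L$, and for every level $\ell=1,\ldots,L$ one job of weight $1/L$ on each dyadic interval $[k2^\ell,(k+1)2^\ell)\subseteq[0,2^L)$. Then $n=2^L-1$, $W\equiv1$, and $\OPT=1/2$ (odd levels on one machine, even levels on the other). A block with a common time is a chain of nested dyadic intervals. Its merged job lives on the longest one with weight at least $w(B)$, and every other job of the block is at most half that long. If some block assignment kept both merged loads below $1/2+O(\varepsilon)$, then $\sum_{B:\,t\in[a_B,b_B)}\widehat w_B\le1+O(\varepsilon)$ at every $t$. Integrating over $[0,2^L)$, the non-longest jobs carry mass $\sum w_i|I_i|$ at most $O(\varepsilon)2^L$ out of $2^L$. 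So the longest jobs, one per block, cover an average time $(1-O(\varepsilon))L$ times. Typically $\Omega(\log n)$ blocks are active, and naming a machine for each costs $r^{\Omega(\log n)}=n^{\Omega(\log r)}$, an exponent depending on $r$.

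Your fallback, grouping inside each level of the recursive split, is what the paper does. But then $\Theta(\log n)$ groups can each have a partially active block at the same moment, so ``one block per type per machine per moment'' is unjustified, and the additive tolerance must be $\tau=\Theta(\varepsilon/D)$. Also, a per-moment exchange does not produce one block assignment valid at all moments. The paper gets simultaneity only inside a group sharing a common slab, via Lemma~\ref{lem:two-orders} and geometric weight classes.

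The paper never bounds the number of active merged jobs; up to $2KDQ$ can be active at once. Instead it:
\begin{itemize}
\item remembers machines only for the at most $4KQ$ ancestor merged jobs that \emph{end inside} a region, which the extra cut at the median endpoint slab guarantees (Lemma~\ref{lem:tree});
\item collapses all \emph{covering} ancestor merged jobs into one constant vector $\kappa$ on a $\lambda$-grid, which is exact because every merged demand is a multiple of $\lambda$;
\item absorbs the resulting $(2+D)^{O(A)}$ factor via $(1+\log n)^\omega\le\omega^\omega n$.
\end{itemize}
None of these ideas is in your plan.

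\textbf{The preprocessing does not support merging with vectors and unrelated machines.}
\begin{itemize}
\item Rounding demands \emph{down} to multiples of $\delta$ is unsound. Unboundedly many small jobs can be active together, and once they round to zero every schedule looks feasible. Even the up to $d/\delta$ big jobs on one machine can lose up to $d$ per coordinate.
\item Absolute rounded profiles do not make the jobs of a type proportional. For example, $(\delta/2,\delta^3)$ and $(\delta^3,\delta/2)$ round alike. The exchange in Lemma~\ref{lem:one-order}, which moves a block to a machine that is behind, needs scalars: with non-proportional, machine-dependent vectors, no machine need be behind in every coordinate.
\item There are at least $\delta^{-rd}$ such types, so $\delta\cdot\#\text{types}\le\varepsilon$ cannot be reached by enlarging $c$. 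The argument is circular.
\end{itemize}
The paper's fix has three steps. It normalizes by $w_i=\min_{j\in E_i}\sum_kp_{ijk}$, drops machines with $q_{ij}>sw_i$, and rounds the ratios $p_{ijk}/w_i$ \emph{up} to multiples of $1/s$. Within a type all demands are then fixed multiples of $w_i$, so Lemma~\ref{lem:merging} controls all $rd$ coordinates at once. The tolerances $\eta$ and $\tau$ are chosen only after the number $K$ of types is known.
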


Splitting the time line at all job endpoints turns an instance into
ordinary vector scheduling with one coordinate for every resource and
every piece of the time line. The number of pieces grows with the input,
and Theorem~\ref{thm:main} keeps it outside the parameter dependence. When
all jobs share the same interval, the model is ordinary vector scheduling
on unrelated machines.

The scalar identical-machine case has a sharper dependence on the
accuracy (Theorem~\ref{thm:scalar}). Even for two identical machines and
one resource, we prove strong NP-hardness and an exponential lower bound
in $1/\varepsilon$ under the Exponential Time Hypothesis
(Proposition~\ref{prop:lower-bounds}).

\paragraph*{Approach.}
All main ideas already appear for identical machines and one resource;
Sections~\ref{sec:blocks}--\ref{sec:analysis} treat this case and prove
Theorem~\ref{thm:scalar}. We split the time line recursively and collect
at each split the jobs that are active there. Those jobs are merged into
few blocks, each placed on a single machine. A block acts as one job whose
interval is the union of the intervals of its jobs. The blocks are chosen
so that this over-estimates the total load only slightly, and their
intervals are known before any machine is chosen. Each split cuts the
current part of the time line at its middle and at the median endpoint of
the earlier blocks inside it, so few block intervals end inside any part.
A dynamic program over the splits then assigns the blocks exactly: inside
a part, it remembers only the machines of the blocks whose intervals end
there and one constant load vector for the others.
Section~\ref{sec:general} extends the algorithm to several resources,
machine-dependent demands, and eligibility sets, and proves
Theorem~\ref{thm:main}. Each job gets a scalar weight, and its resource
requirements are rounded into a bounded number of types; within a type,
all requirements are fixed multiples of the scalar weight. Blocks are
formed within each type, and one dynamic program assigns the blocks of all
types together.

\paragraph*{Related work.}
The blocks follow the rounding phase of ARSW~\cite[Section~3]{Azar2002}.
Over-estimating loads by step functions with few steps, and the cut at the
median endpoint, follow Grandoni, M\"omke, and
Wiese~\cite[Sections~2.1--2.2]{GrandoniMomkeWiese2021}. Fixing the block
intervals before choosing machines permits the stronger running-time bound
here. Our rounding of resource requirements into types
(Section~\ref{sec:general}) extends the minimum-weight normalization and
grouping of Armon et al.~\cite[Section~4.1.2]{Armon2003} to several
resources.
For ordinary vector scheduling on identical machines, Chekuri and
Khanna~\cite{ChekuriKhanna2004} gave a PTAS in fixed dimension, and
Bansal et al.~\cite{BansalOosterwijkVredeveldZwaan2016} gave an EPTAS.
Their dimension parameter counts all coordinates; here the number of
time coordinates grows with the input. Temporal bin
packing~\cite{DellAmico2020} and Round-UFP~\cite{KarKhanWiese2022}
minimize the number of capacity-feasible bins or copies.
For scalar temporary tasks on identical machines, unit weights admit
an optimal polynomial-time assignment~\cite{Antoniadis2011}, while
variable weights rule out approximation ratios below $3/2$ in polynomial
time when $r$ is part of the input~\cite[Theorem~5.1]{Azar2002}.

\section{The model}\label{sec:preliminaries}
There are $n$ jobs, $r$ machines, and $d$ resources. Job $i$ has a fixed
half-open interval $I_i=[a_i,b_i)$ with $a_i<b_i$, a nonempty eligibility set
$E_i\subseteq[r]$, and a vector
$p_{ij}=(p_{ij1},\ldots,p_{ijd})\in\mathbb Q_{\ge0}^d$ for each
$j\in E_i$. An assignment $x$ chooses $x_i\in E_i$ once for each job.
Its load and peak are
\begin{equation}\label{eq:objective}
 L_{x,jk}(t)=\sum_{\substack{i:\,t\in I_i\\x_i=j}}p_{ijk},
 \qquad \max_{t,\,j\in[r],\,k\in[d]}L_{x,jk}(t).
\end{equation}
We write $\OPT$ for the minimum peak. A resource with a positive capacity
on machine $j$ is handled by dividing its demands on $j$ by that capacity.
The intervals remain fixed for every eligible machine.
All vector inequalities below are componentwise, and $\one$ is the all-ones
vector.

We assume that the $2n$ endpoints are distinct. Otherwise, sort the
endpoints by time, with departures before arrivals at equal times and
remaining ties broken by job index, and replace every endpoint by its rank.
Because departures come first, the jobs active between two consecutive
ranks are active together at some time of the original instance.
Conversely, the jobs active at an original time $t$ are active together in
the new instance right after the last endpoint at or before $t$. Hence
every assignment keeps its peak. The endpoints divide the span from the
first to the last endpoint into $2n-1$ half-open intervals, called
\emph{slabs} and numbered $1,\ldots,2n-1$ from left to right; loads are
constant on each slab. The \emph{arrival order}
lists jobs by increasing $a_i$, and the \emph{reverse-departure order} by
decreasing $b_i$.

All numerical input is rational and encoded in binary. The input length
$N$ includes the demands, endpoints, eligibility sets, $r,d$, and
$\varepsilon$. Running times count bit operations; every use of
$\poly(N)$ has an absolute degree. In Lemmas~\ref{lem:one-order},
\ref{lem:two-orders}, and~\ref{lem:merging}, polynomial time refers to the
length of the lemma's input, including its rational tolerances. Logarithms
are natural unless a base is specified.

\begin{figure}[tb]
\centering
\begin{tikzpicture}[x=.65cm,y=.48cm,font=\scriptsize,>=Stealth]
\node[anchor=west,font=\small\bfseries] at (-.45,6.55) {Jobs and resource demands};
\node at (6.8,4.8) {machine 1};
\node at (9.2,4.8) {machine 2};
\foreach \name/\a/\b/\yy/\vone/\vtwo in
 {A/0/4/3.6/{(2,1)}/{(1,3)},B/1/3/2.3/{(1,2)}/{(2,1)},C/3/5/1.0/{(2,1)}/{(1,2)}} {
 \node[anchor=east] at (-.3,\yy) {$\name$};
 \draw[line width=2.5pt,black!65] (\a,\yy)--(\b,\yy);
 \fill (\a,\yy) circle (1.3pt);
 \draw[fill=white] (\b,\yy) circle (1.3pt);
 \node at (6.8,\yy) {$\vone$};
 \node at (9.2,\yy) {$\vtwo$};
}
\draw[->] (-.15,0)--(5.6,0);
\foreach \xx in {0,1,2,3,4,5}{
 \draw (\xx,.12)--(\xx,-.12);
 \node[below] at (\xx,-.12) {$\xx$};
}
\node[anchor=west] at (5.7,0) {time};
\node at (7.9,-.8) {(CPU, memory)};
\begin{scope}[shift={(12,0)}]
\node[anchor=west,font=\small\bfseries] at (-.5,6.55) {Loads of one assignment};
\foreach \base/\mach in {3.1/1,0/2}{
 \draw[->] (0,\base)--(5.6,\base);
 \draw[->] (0,\base)--(0,{\base+2.5});
 \draw[black!25,dotted] (0,{\base+2})--(5.2,{\base+2});
 \node[anchor=east] at (-.12,{\base+2}) {$2$};
 \node[anchor=east] at (-.12,\base) {$0$};
 \node[anchor=west] at (3.1,{\base+2.45}) {machine \mach};
}
\draw[line width=.9pt] (0,5.1)--(4,5.1)--(4,3.1)--(5.2,3.1);
\draw[line width=.9pt,densely dashed] (0,4.1)--(4,4.1)--(4,3.1);
\draw[line width=.9pt] (0,0)--(1,0)--(1,2)--(3,2)--(3,1)--(5,1)--(5,0);
\draw[line width=.9pt,densely dashed] (0,0)--(1,0)--(1,1)--(3,1)--(3,2)--(5,2)--(5,0);
\foreach \xx in {0,1,2,3,4,5}{
 \draw (\xx,.10)--(\xx,-.10);
 \node[below] at (\xx,-.1) {$\xx$};
}
\draw[line width=.9pt] (.15,-1.2)--(.95,-1.2);
\node[anchor=west] at (1.1,-1.2) {CPU};
\draw[line width=.9pt,densely dashed] (2.55,-1.2)--(3.35,-1.2);
\node[anchor=west] at (3.5,-1.2) {memory};
\end{scope}
\end{tikzpicture}
\caption{Three interval jobs on two machines with two resources. Assigning
$A$ to machine~1 and $B,C$ to machine~2 gives peak load $2$.
The demand of each job depends on its assigned machine. The right panel
shows both resource loads over time; $B$ departs exactly when $C$ arrives.}
\label{fig:model}
\end{figure}

\paragraph*{Identical machines and one resource.}
Sections~\ref{sec:blocks}--\ref{sec:analysis} treat the case $d=1$ in which
every machine is eligible for every job and job $i$ has the same demand
$w_i\ge0$, its \emph{weight}, on every machine. This case contains all main ideas, and
Section~\ref{sec:general} extends the algorithm to the general model. A job
of weight zero can be placed on any machine, so we remove it and restore it
at the end; henceforth $w_i>0$. Empty instances and $r=1$ are solved
directly. The \emph{active weight} at time $t$ is
\[
 W(t)=\sum_{i:\,t\in I_i}w_i.
\]
Every assignment of peak at most $T$ has $W(t)\le rT$, since $W(t)$ is the
sum of the $r$ loads at $t$; placing all jobs on one machine gives peak
$\max_tW(t)$. Hence
\begin{equation}\label{eq:LB}
 \mathrm{LB}=\max\Bigl\{\max_iw_i,\ \frac1r\max_tW(t)\Bigr\}
\end{equation}
satisfies $\mathrm{LB}\le\OPT\le r\,\mathrm{LB}$.

\section{Blocks for two job orders}\label{sec:blocks}
A \emph{group} is a set of jobs that are all active on a specified common
slab $[t_-,t_+)$ and share a set $E$ of eligible machines; $E=[r]$ until
Section~\ref{sec:general}. Before $t_-$, its active jobs form a prefix of
its arrival order; from $t_+$ on, they form a prefix of its
reverse-departure order; on the common slab, all its jobs are active. We
partition a group into \emph{blocks} so that every assignment can be
imitated by one that keeps each block on a single machine, simultaneously
for all prefixes of both orders. The construction follows the rounding
phase of ARSW~\cite[Section~3]{Azar2002}.

Let $G$ be a set of jobs with weights $w_i>0$ and $E\subseteq[r]$ a
nonempty set of machines. For an assignment $x\in E^G$, a machine $j$,
and a set $F\subseteq G$, write
\[
 w^x_j(F)=\sum_{\substack{i\in F\\x_i=j}}w_i,\qquad
 w(F)=\sum_{i\in F}w_i.
\]

\begin{lemma}[One order]\label{lem:one-order}
For an order of $G$ and a rational \emph{tolerance} $\xi>0$, one can compute
in polynomial time a partition of $G$ into at most $1+8w(G)/\xi$ sets of
consecutive
jobs, each a single job or of weight less than $\xi/2$, with the following
property. For every $x\in E^G$ there is an assignment $y\in E^G$ that is
constant on every set and satisfies $w^y_j(F)<w^x_j(F)+\xi$ for every
prefix $F$ and every $j\in E$.
\end{lemma}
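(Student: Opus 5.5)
We build the partition greedily along the given order and then construct $y$ by a rounding argument in the style of ARSW. Scan the jobs in order. Every job of weight at least $\xi/4$ becomes a singleton set, and we call it \emph{big}. Each maximal run of consecutive \emph{small} jobs (weight $<\xi/4$) is cut greedily into pieces: we close a piece just before its weight would reach $\xi/4$. Every piece then has weight $<\xi/4<\xi/2$. Any two consecutive pieces of the same run together weigh at least $\xi/4$.

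To count the sets, note that there are at most $4w(G)/\xi$ big jobs. In each run, pairing consecutive pieces shows the number of pieces is at most $1+8\,w(\text{run})/\xi$, roughly. The extra $+1$ per run is charged to the big job or boundary that ends the run. With some care this gives the bound $1+8w(G)/\xi$. If the constants do not close directly, I would refine the charging: each piece except the last of its run has weight $\ge\xi/8$, since otherwise it would have absorbed the next job. Each final piece is charged to the following big job, of weight $\ge\xi/4$, or to the global $+1$. This yields at most $1+8w(G)/\xi$. Polynomial time is clear.

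For the assignment $y$, big jobs keep $y_i=x_i$. Treat each small set $S$ as a vector of fractional usages $u_j(S)=w^x_j(S)/w(S)$, for $j\in E$. We must place $S$ wholly on one machine so that, for every $j$, the prefix sums of $w(S)\bigl(\one[y_S=j]-u_j(S)\bigr)$ over the sets of the order stay below $\xi$. This is a discrepancy (chip-firing) rounding along a line with $|E|$ colours and item weights $<\xi/4$. I would process the small sets in order, maintaining the deficits
\[
D_j=\sum_{\text{processed }S}\bigl(w^x_j(S)-w(S)\one[y_S=j]\bigr),
\]
which sum to zero over $j$ at all times. Each set $S$ goes to the machine $j$ with the largest current $D_j+w^x_j(S)$. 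The invariant to maintain is $D_j>-\xi/2$ for all $j$, or equivalently that the excess $w^y_j-w^x_j$ on processed prefixes stays $<\xi/2$. The key step is this: after processing $S$, the chosen machine's deficit drops by $w(S)-w^x_j(S)$. Since $j$ maximized $D_j+w^x_j(S)$ and the average of $D+w^x(S)$ over $E$ is $w(S)/|E|\ge0$, the new value is at least $-w(S)>-\xi/4$. Hence no deficit falls below $-\xi/4$ at set boundaries. A prefix $F$ ending inside a set $S$ adds a further error of at most $w(S)<\xi/4$: $y$ may place the part of $S$ in $F$ on $j$ while $x$ did not. Big jobs contribute no error. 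Together this gives $w^y_j(F)-w^x_j(F)<\xi/4+\xi/4<\xi$.

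The main obstacle is the deficit invariant. The greedy "largest deficit" rule bounds the chosen coordinate from below, but I must check that coordinates which are not chosen never decrease. They cannot: their deficits change by $+w^x_{j'}(S)\ge0$. So only the chosen one decreases, and it stays above $-w(S)$ by the averaging argument. If this slips, the fallback is the standard ARSW argument: choose the machine whose cumulative fractional load most exceeds its integral load.
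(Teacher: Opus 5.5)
Your proof is correct and is essentially the paper's: big jobs become singletons, each run of small jobs is cut greedily (you close a piece before it reaches $\xi/4$ and count by pairing consecutive pieces, whereas the paper closes a set once it reaches $\xi/4$), and the sets are then placed one by one on a machine where $x$, counted through the current set, is ahead of $y$ --- your argmax rule is one such choice. Your pairing count already gives exactly $1+8w(G)/\xi$, since each big job pays for itself and for the run it ends, and the last run uses the global $+1$; so drop the fallback, whose claim that every non-final piece weighs at least $\xi/8$ is false, because a small next job of weight just below $\xi/4$ closes a piece of arbitrarily small weight.
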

Jobs heavier than $\xi/4$ form sets of their own, and the other jobs are
cut greedily into sets of weight about $\xi/4$. Appendix~\ref{app:rounding}
gives the proof.

\begin{lemma}[Two orders]\label{lem:two-orders}
For two orders of a nonempty $G$ and a rational $\xi>0$, one can compute in
polynomial time a partition of $G$ into at most $(1+16w(G)/\xi)^2$ blocks
with the following two properties.
\begin{itemize}
\item[(a)] For every $x\in E^G$ there is an assignment $y\in E^G$ that is
constant on every block and satisfies $w^y_j(F)<w^x_j(F)+\xi$ for every
prefix $F$ of either order and every $j\in E$.
\item[(b)] For every prefix $F$ of either order, the blocks that meet $F$
without being contained in $F$ have total weight less than $\xi/2$.
\end{itemize}
\end{lemma}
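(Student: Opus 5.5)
The plan is to build the blocks as the common refinement of two one-order partitions, and to prove (a) by a rounding argument over the resulting grid. For each of the two orders of $G$, apply Lemma~\ref{lem:one-order} with tolerance $\xi/2$. This gives partitions $\mathcal P_1$ and $\mathcal P_2$ of $G$ into at most $1+16w(G)/\xi$ sets each. Every set is consecutive in its order, and each is either a single job or of weight less than $\xi/4$. Take as blocks all nonempty intersections $P\cap Q$ with $P\in\mathcal P_1$ and $Q\in\mathcal P_2$. Computing them is clearly polynomial, and their number is at most $|\mathcal P_1|\,|\mathcal P_2|\le(1+16w(G)/\xi)^2$.

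\emph{Property (b).} Let $F$ be a prefix of order~1 (order~2 is symmetric). The parts of $\mathcal P_1$ are consecutive in order~1, so $F$ is a union of whole parts of $\mathcal P_1$ plus a proper nonempty piece of at most one part $P$. A single job cannot be cut, so $P$ is light, with $w(P)<\xi/4$. Every block that meets $F$ without lying in $F$ lies inside $P$. Hence these blocks have total weight less than $\xi/4<\xi/2$.

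\emph{Property (a).} This is the main step. Fix $x\in E^G$ and view the blocks as cells of a grid, with rows indexed by $\mathcal P_1$ (in order~1) and columns by $\mathcal P_2$ (in order~2).
\begin{itemize}
\item Define the fractional assignment $z_{B,j}=w^x_j(B)/w(B)$ for each block $B$ and $j\in E$. Then $\sum_{B}z_{B,j}w(B)$ over the blocks contained in a prefix $F$ equals $w^x_j$ of that union.
\item By the argument for (b), a prefix $F$ of order~1 is a union of full rows plus blocks of total weight under $\xi/4$. The row-prefix sums therefore control $w^y_j(F)$ up to an additive $\xi/4$, and likewise for order~2 with column-prefix sums.
\item It then suffices to round $z$ to an integral $y$, constant on blocks, so that for every machine $j$ every row-prefix sum and every column-prefix sum increases by less than about $\xi/4$. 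Blocks that are single jobs are already integral under $z$ and stay fixed. All other blocks lie in a light row or a light column, so they weigh less than $\xi/4$.
\end{itemize}

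\emph{Rounding, the expected obstacle.} For each $j$, the constraints are the chain of row-prefixes and the chain of column-prefixes. Together with $\sum_j z_{B,j}=1$, these are two laminar families. I would first try iterative LP rounding. Write the LP with these constraints as upper bounds, take an extreme point, and count tight constraints against fractional variables; this should show that some constraint involves at most two fractional blocks. Drop that constraint, which loses at most about twice the maximal fractional block weight, and repeat.

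The delicate part is getting the loss strictly below the slack. The fallback is to sharpen Lemma~\ref{lem:one-order}'s granularity by using tolerance $\xi/4$ inside the construction. The count $(1+16w(G)/\xi)^2$ suggests the intended route is more direct. In that route one rounds row by row, in the ARSW style: process the rows in order~1 and, within each row, move weight between machines along the column order, keeping a running discrepancy per machine below one light set. If this works, it avoids LP arguments entirely. I would try this greedy version before the LP argument.
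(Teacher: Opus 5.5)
Your block count and your proof of (b) are correct for the grid you build. But (a), the heart of the lemma, is not proved. After passing to the fractional $z$, you still have to round on the grid so that every union of initial rows and every union of initial columns gains less than a fixed fraction of $\xi$ on each machine. You only name two strategies for this, and the greedy one you favour fails. A running discrepancy, whether reset in each row or carried through the rows, controls only prefixes of the order in which you traverse the cells. A prefix of the second order is a union of columns and meets every row. Rounding each row along the column order can leave an excess of up to one light cell, about $\xi/4$, on the same machine in every row. So the error on a column prefix can grow to about $\mu\xi/4$ over the $\mu=|\mathcal P_1|$ rows. Keeping all column prefixes small forces you to coordinate the rounding across rows. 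That is a weighted rounding problem for prefixes of two orders, of exactly the kind the lemma is about. The iterative LP route is the same problem, and, as you note, bounding the loss from dropped constraints is the whole difficulty; nothing in the proposal resolves it.

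The missing idea is to nest the second partition inside the first and make it finer by a factor $\mu$. Then Lemma~\ref{lem:one-order} can be applied independently in each row and its errors summed. The paper applies Lemma~\ref{lem:one-order} to the first order with tolerance $\xi$, obtaining $G_1,\dots,G_\mu$ with $\mu\le1+8w(G)/\xi$. It then applies the lemma separately inside each $G_h$ to the restricted second order, with tolerance $\xi/(2\mu)$. For (a), take $y$ from Lemma~\ref{lem:one-order}(a) in every $G_h$. A second-order prefix meets each $G_h$ in a prefix of its restricted order, so the errors total less than $\mu\cdot\xi/(2\mu)=\xi/2$. A first-order prefix contains some sets $G_h$ completely, which together contribute less than $\xi/2$. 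It cuts at most one further set $G_{h'}$, which is not a single job and so weighs less than $\xi/2$. For (b), a first-order prefix splits only blocks inside $G_{h'}$. A second-order prefix splits at most one block in each $G_h$, of weight below $\xi/(4\mu)$. The count is $\sum_h(1+16\mu w(G_h)/\xi)=\mu(1+16w(G)/\xi)\le(1+16w(G)/\xi)^2$, since the number of second-level sets inside $G_h$ is at most $1+16\mu w(G_h)/\xi$. This is where the square comes from. Intersecting the rows with a single global second partition at tolerance $\xi/(2\mu)$ would not give this bound, since such a refinement can have up to $|\mathcal P_1|\,|\mathcal P_2|$ cells.
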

\begin{proof}
Apply Lemma~\ref{lem:one-order} to the first order with tolerance $\xi$,
and let $G_1,\ldots,G_\mu$ be the resulting sets, where
$1\le\mu\le1+8w(G)/\xi$. Apply it again inside every $G_h$ to the
restriction of the second order, with tolerance $\xi/(2\mu)$; the resulting
sets are the blocks. The set $G_h$ yields at most $1+16\mu\,w(G_h)/\xi$
blocks, so there are at most $\mu(1+16w(G)/\xi)\le(1+16w(G)/\xi)^2$
blocks.

For~(a), fix $x$, and let $y$ agree on every $G_h$ with the assignment
that Lemma~\ref{lem:one-order} gives for the restriction of $x$ to $G_h$.
A prefix $F$ of the second order meets every $G_h$ in a prefix of its
restricted order, so $w^y_j(F)-w^x_j(F)<\mu\cdot\xi/(2\mu)=\xi/2$. A prefix
$F$ of the first order contains some sets $G_h$ completely, meets at most
one other set $G_{h'}$ partially, and misses the rest. Each complete set
$G_h$ is a prefix of its own restricted order, so the complete sets
contribute less than $\mu\cdot\xi/(2\mu)=\xi/2$ to $w^y_j(F)-w^x_j(F)$. The
set $G_{h'}$ is not a single job, so it contributes at most
$w(G_{h'})<\xi/2$.

For~(b), say that $F$ \emph{splits} a block if it meets the block without
containing it. A prefix of the first order splits only blocks inside
$G_{h'}$, of total weight less than $\xi/2$. A prefix of the second order
splits at most one block inside every $G_h$, since the blocks of $G_h$ are
consecutive in its restricted order. A split block is not a single job, so
it weighs less than $\xi/(4\mu)$, and all split blocks together weigh less
than $\xi/4$.
\end{proof}

\section{Merged jobs with fixed intervals}\label{sec:merged}
Let $G$ be a group with common slab $[t_-,t_+)$ and common eligible
machines $E$. For an assignment $x\in E^G$, a machine $j$, and a time $t$,
write
\[
 g_{x,j}(t)=\sum_{\substack{i\in G:\,t\in I_i\\x_i=j}}w_i,
 \qquad W_G(t)=\sum_{i\in G:\,t\in I_i}w_i.
\]
The intervals of the jobs of a block $S\subseteq G$ all contain the common
slab, so their union is the interval $[a_S,b_S)$ with
$a_S=\min_{i\in S}a_i$ and $b_S=\max_{i\in S}b_i$. The \emph{merged job}
of $S$ has this interval and a weight $\widehat w_S\ge w(S)$. An assignment
$y$ of the merged jobs chooses a machine $y_S\in E$ for every block; we
also write $y$ for the assignment that places every job of $S$ on $y_S$.
The \emph{merged load} of $y$ is
\[
 \widehat g_{y,j}(t)=\sum_{\substack{S:\,t\in[a_S,b_S)\\y_S=j}}\widehat w_S.
\]
The interval of a merged job is determined by its block alone, before any
machine is chosen. Figure~\ref{fig:merged} shows an example.

\begin{figure}[tb]
\centering
\begin{tikzpicture}[x=.5cm,y=.5cm,font=\scriptsize,>=Stealth,
 blockA/.style={black},blockB/.style={black!40}]
\node[anchor=west,font=\small\bfseries] at (-1.2,7.2) {Jobs and merged jobs};
\fill[black!8] (3,-.4) rectangle (6,6.6);
\node at (4.5,6.25) {common slab};
\def\job#1#2#3#4#5#6{\draw[#5,line width=2.5pt] (#2,#4)--(#3,#4);
 \node[anchor=east] at (-.15,#4) {$#1$};
 \node[anchor=west] at (#3,#4) {$#6$};}
\job{A}{0}{8}{5.3}{blockA}{2}
\job{B}{1}{6}{4.5}{blockB}{1}
\job{C}{2}{9}{3.7}{blockA}{1}
\job{D}{3}{7}{2.9}{blockB}{2}
\draw[black!30] (-.9,2.2)--(10,2.2);
\job{\{A,C\}}{0}{9}{1.4}{blockA}{3}
\job{\{B,D\}}{1}{7}{.6}{blockB}{3}
\draw[->] (-.1,-.4)--(10,-.4);
\foreach \x in {0,...,9}{\draw (\x,-.3)--(\x,-.5);\node[below] at (\x,-.5) {$\x$};}
\begin{scope}[shift={(13.2,0)}]
\node[anchor=west,font=\small\bfseries] at (-.6,7.2) {Loads};
\foreach \base/\mach in {3.6/1,0/2}{
 \draw[->] (0,\base)--(10,\base);
 \draw[->] (0,\base)--(0,{\base+2.8});
 \draw[black!25,dotted] (0,{\base+2.4})--(9.6,{\base+2.4});
 \node[anchor=east] at (-.1,{\base+2.4}) {$3$};
 \node[anchor=east] at (-.1,\base) {$0$};
 \node[anchor=west] at (6.6,{\base+2.75}) {machine \mach};
}
\draw[line width=.9pt] (0,5.2)--(2,5.2)--(2,6)--(8,6)--(8,4.4)--(9,4.4)--(9,3.6);
\draw[line width=.9pt,densely dashed] (0,6)--(9,6)--(9,3.6);
\draw[line width=.9pt] (0,0)--(1,0)--(1,.8)--(3,.8)--(3,2.4)--(6,2.4)--(6,1.6)--(7,1.6)--(7,0);
\draw[line width=.9pt,densely dashed] (1,0)--(1,2.4)--(3,2.4) (6,2.4)--(7,2.4)--(7,0);
\foreach \x in {0,...,9}{\draw (\x,.1)--(\x,-.1);\node[below] at (\x,-.1) {$\x$};}
\draw[line width=.9pt] (.2,-1.35)--(1.2,-1.35);
\node[anchor=west] at (1.3,-1.35) {jobs};
\draw[line width=.9pt,densely dashed] (3.7,-1.35)--(4.7,-1.35);
\node[anchor=west] at (4.8,-1.35) {merged jobs};
\end{scope}
\end{tikzpicture}
\caption{Merging blocks. Jobs $A,B,C,D$ have intervals
$[0,8),[1,6),[2,9),[3,7)$ and weights $2,1,1,2$; all are active on the
common slab $[3,6)$. With tolerance $\xi=24$ and the arrival order $A,B,C,D$ as
the first order, Lemma~\ref{lem:two-orders} keeps all four jobs in one set
and cuts the reverse-departure order $C,A,D,B$ at weight $3$. This gives
the blocks $\{A,C\}$ (black) and $\{B,D\}$ (gray), whose merged jobs have
intervals $[0,9)$ and $[1,7)$ and weight $3$. On the right, the blocks are
on machines 1 and 2. The merged load exceeds the load of the jobs only
while a block is partly active.}\label{fig:merged}
\end{figure}

\begin{lemma}[Merging a group]\label{lem:merging}
Let $G$ be a group, let $0<\eta\le1$ and $\tau>0$ be rational, and let $Q$
be an integer with
\begin{equation}\label{eq:block-count}
 Q\ge\ceil{(1+128/\eta)^2}\bigl(1+\floor{\log_2(1+\eta w(G)/\tau)}\bigr).
\end{equation}
One can compute in polynomial time a partition of $G$ into at most $Q$
blocks, and block weights $\widehat w_S\ge w(S)$ that are multiples of
$\tau/(4Q)$, with the following properties.
\begin{itemize}
\item[(a)] Every assignment $y$ of the merged jobs satisfies, for all
times $t$ and all $j\in E$,
\begin{equation}\label{eq:merged-upper}
 g_{y,j}(t)\le\widehat g_{y,j}(t).
\end{equation}
\item[(b)] For every $x\in E^G$, some assignment $y$ of the merged jobs
satisfies, for all times $t$ and all $j\in E$,
\begin{equation}\label{eq:merged-comparison}
 \widehat g_{y,j}(t)\le g_{x,j}(t)+2\bigl(\eta W_G(t)+\tau\bigr).
\end{equation}
\end{itemize}
\end{lemma}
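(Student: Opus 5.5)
We prove the lemma by splitting $G$ into geometric weight classes and applying Lemma~\ref{lem:two-orders} separately inside each class. Part~(a) will come directly from the definitions, and part~(b) from the prefix structure of a group.

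\emph{Classes and tolerances.} Let $m=1+\floor{\log_2(1+\eta w(G)/\tau)}$. Sort $G$ by weight. Put the jobs of weight at most $\tau/\eta$ into a light class $C_0$. Split the heavier jobs into classes $C_1,\dots$, where $C_\ell$ consists of the jobs of weight in $[2^{\ell-1}\tau/\eta,2^\ell\tau/\eta)$. A heavy class can be nonempty only while $2^{\ell-1}\tau/\eta\le w(G)$, so there are at most $m$ classes.

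\emph{Issue with per-class tolerances.} A tolerance proportional to $\eta w(C)$ keeps the number of blocks of $C$ bounded by $(1+16/\eta)^2$. However, that tolerance is not related to $W_G(t)$ at the time $t$ in question. The error has to be charged either to $\tau$ or to the weight that is active at $t$.

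\emph{Block construction.} Use the structure of a group. At a time $t<t_-$ the active jobs of each class form a prefix $F$ of its arrival order, and at $t\ge t_+$ a prefix of its reverse-departure order. Inside each class, apply Lemma~\ref{lem:two-orders} nested at geometric scales: apply it with tolerance $\xi=\eta\,2^{\ell-1}\tau/\eta\cdot c$ for a small constant $c$ (about $1/64$). The guarantee (a) of Lemma~\ref{lem:two-orders} then gives $w^y_j(F)<w^x_j(F)+\xi$. Whenever $F$ is nonempty, $\xi\le c\,\eta\,w(F)\cdot 2\le c'\eta W_G(t)$, because $F$ contains a job of weight at least $2^{\ell-1}\tau/\eta$. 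For the light class, take $\xi=c\tau$; its contribution is $O(\tau)$. The total number of blocks per class must stay bounded by $(1+128/\eta)^2$, and I expect this to be the main obstacle. A class of similar weights may contain many jobs, so $w(C_\ell)/\xi$ is unbounded. The fix I would try first is to apply Lemma~\ref{lem:two-orders} to the classes after normalizing by the active weight. Alternatively, I would cut the classes differently, by the \emph{rank} of a job in the arrival and departure orders in terms of cumulative weight scale, i.e. classes defined by the cumulative prefix weight $w(F)\in[2^{\ell-1}\tau/\eta,2^\ell\tau/\eta)$. Then each class has weight $O(2^\ell\tau/\eta)$ and tolerance $\Theta(\eta 2^\ell\tau/\eta)$, giving $O(1/\eta)^2$ blocks. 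This explains the constant $128$ and the factor $m$ in \eqref{eq:block-count}.

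\emph{Weights and part~(a).} Set $\widehat w_S$ to $w(S)$ rounded up to a multiple of $\tau/(4Q)$. The total rounding over at most $Q$ blocks is at most $\tau/4$. Part~(a) holds because every job of $S$ is active only inside $[a_S,b_S)$, lies on machine $y_S$, and $\widehat w_S\ge w(S)$.

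\emph{Part~(b).} Take $y$ from (a) of Lemma~\ref{lem:two-orders} in each class. The difference $\widehat g_{y,j}(t)-g_{x,j}(t)$ then splits into three parts:
\begin{itemize}
\item the imitation error $w^y_j(F)-w^x_j(F)$, summed over classes, which gives the $\eta W_G(t)+\tau$ terms above;
\item the weight of blocks partly active at $t$, i.e.\ split by the prefix $F$, which by (b) of Lemma~\ref{lem:two-orders} is less than $\xi/2$ per class, so it is bounded by the same quantity;
\item the rounding, at most $\tau/4$.
\end{itemize}
On the common slab all jobs are active, so only the rounding term contributes. Summing the three parts, with the constants tuned by $c$, yields the factor $2(\eta W_G(t)+\tau)$.
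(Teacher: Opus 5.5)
There is a genuine gap, at the point you yourself flag as the main obstacle. The construction you actually carry out does not bound the number of blocks. It uses classes by individual job weight, tolerance $c\,2^{\ell-1}\tau$, and a light class with tolerance $c\tau$. A class, in particular the light one, can weigh nearly $w(G)$, while $Q$ grows only like $\log(\eta w(G)/\tau)$. Your closing suggestion, to classify by cumulative prefix weight, points the right way, but it leaves open the choices the proof depends on.

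The paper lets $m_i$ be the \emph{smaller} of the two prefix weights ending at $i$: $W_G(a_i)$ and the value of $W_G$ on the slab ending at $b_i$. It puts $i$ into $\mathcal C_\ell$ when $2^\ell\tau\le\tau+\eta m_i<2^{\ell+1}\tau$ and uses tolerance exactly $\sigma_\ell=2^{\ell-1}\tau$. Taking the minimum gives two properties at once.
\begin{itemize}
\item[(i)] Every job of $\mathcal C_\ell$ lies in the longest prefix of the group's arrival order, or of its reverse-departure order, of weight below $2^{\ell+1}\tau/\eta$. Hence $w(\mathcal C_\ell)/\sigma_\ell<8/\eta$, and Lemma~\ref{lem:two-orders} gives at most $(1+128/\eta)^2$ blocks per class. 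Moreover only $1+\floor{\log_2(1+\eta w(G)/\tau)}$ classes occur.
\item[(ii)] A job active at $t$ has $m_i\le W_G(t)$, because $W_G$ is nondecreasing before the common slab and nonincreasing after it. So if $\ell^*$ is the largest class active at $t$, the tolerances of the active classes sum to less than $2^{\ell^*}\tau\le\tau+\eta W_G(t)$.
\end{itemize}
Using the arrival-prefix weight alone, or the larger of the two prefix weights, destroys (ii). A job that arrives last and departs last lands in the top class, yet it is active after the common slab, when $W_G(t)$ may be tiny. Your proposal establishes neither property, so it proves neither the block count nor the charging of errors to $\eta W_G(t)+\tau$.

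Your constants also do not fit. A factor $c\approx1/64$ in the tolerance multiplies the ratio of class weight to tolerance by about $64$. Lemma~\ref{lem:two-orders} then yields far more than the $(1+128/\eta)^2$ blocks per class that \eqref{eq:block-count} allows. No such factor is needed. With tolerance $\sigma_\ell$, a class active at $t$ costs less than $2\sigma_\ell$ in total:
\begin{itemize}
\item less than $\sigma_\ell$ for imitation, by Lemma~\ref{lem:two-orders}(a);
\item less than $\sigma_\ell/2$ for blocks that meet the active set $F$ without lying in it, by Lemma~\ref{lem:two-orders}(b);
\item less than $\sigma_\ell/2$ for rounding to multiples of $\sigma_\ell/(2Q)$. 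Your uniform unit $\tau/(4Q)$ would also do.
\end{itemize}
The geometric bound in (ii) then gives exactly $2(\eta W_G(t)+\tau)$.

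Two smaller points. On the common slab, $F$ is the whole class and the imitation error $w^y_j(F)-w^x_j(F)$ need not vanish, so it is false that only rounding contributes there. And the split-block term needs the observation that the merged job of a block $S$ is active at $t$ exactly when $S$ meets $F$. This holds because every job of the group contains the common slab.
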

The empty group has no blocks. The next three steps prove the lemma for a
nonempty group.

\paragraph*{Weight classes.}
The function $W_G$ is nondecreasing before the common slab and
nonincreasing afterwards. For a job $i\in G$, let $m_i$ be the smaller of
$W_G(a_i)$ and the value of $W_G$ on the slab that ends at $b_i$. These
are the weights of the prefix of the arrival order that ends with $i$ and
of the prefix of the reverse-departure order that ends with $i$. Put
\[
 \mathcal C_\ell=\{i\in G:2^\ell\tau\le\tau+\eta m_i<2^{\ell+1}\tau\},
 \qquad \sigma_\ell=2^{\ell-1}\tau.
\]
Only the indices $0\le\ell\le\log_2(1+\eta w(G)/\tau)$ occur. Call
$\mathcal C_\ell$ a \emph{weight class}; it is \emph{active} at $t$ if one
of its jobs is active at $t$. Then
\begin{equation}\label{eq:class-bounds}
 w(\mathcal C_\ell)<\frac{2^{\ell+2}\tau}{\eta},\qquad
 \sum_{\ell:\,\mathcal C_\ell\text{ active at }t}\sigma_\ell
 \le\eta W_G(t)+\tau.
\end{equation}
For the first bound, a job $i\in\mathcal C_\ell$ has
$m_i<2^{\ell+1}\tau/\eta$, so the arrival prefix or the reverse-departure
prefix that ends with $i$ weighs less than $2^{\ell+1}\tau/\eta$. Hence
$\mathcal C_\ell$ lies in the union of the longest such prefix of each
order. For the second bound, a job $i$ that is active at $t$ has
$m_i\le W_G(t)$ by the monotonicity of $W_G$. If $\ell^*$ is the largest
index of an active class, the geometric sum gives
$\sum_{\ell=0}^{\ell^*}\sigma_\ell<2^{\ell^*}\tau\le\tau+\eta W_G(t)$. With
no active class, the sum is zero.

\paragraph*{Blocks.}
For every nonempty class $\mathcal C_\ell$, apply
Lemma~\ref{lem:two-orders} with the arrival order of $\mathcal C_\ell$ as
the first order, its reverse-departure order as the second, and tolerance
$\xi=\sigma_\ell$. Since $w(\mathcal C_\ell)/\sigma_\ell<8/\eta$, the class
has at most
$(1+128/\eta)^2$ blocks, so by~\eqref{eq:block-count} the group has at most
$Q$ blocks. Round the weight of every block of $\mathcal C_\ell$ up to a
multiple of $\sigma_\ell/(2Q)$. This unit equals $2^\ell\tau/(4Q)$, a
multiple of $\tau/(4Q)$.

\paragraph*{Loads.}
Fix a class $\mathcal C_\ell$ and a time $t$, and let $F$ be the set of jobs
of $\mathcal C_\ell$ active at $t$. The merged job of a block $S$ of
$\mathcal C_\ell$ is active at $t$ exactly when $S$ meets $F$. Indeed, if
$t<t_-$, then $b_S\ge t_+>t$, and $a_S\le t$ says that some job of $S$ has
arrived. If $t\ge t_+$, then $a_S\le t_-\le t$, and $b_S>t$ says that some
job of $S$ has not departed. On the common slab, every block meets
$F=\mathcal C_\ell$.

Let $y$ be an assignment of the merged jobs. The active merged jobs of
$\mathcal C_\ell$ on machine $j$ have total weight at least $w^y_j(F)$;
summing over the classes proves~\eqref{eq:merged-upper}. They exceed
$w^y_j(F)$ by at most the weight of the blocks that meet $F$ without being
contained in $F$, plus the rounding of at most $Q$ blocks. The set $F$ is
empty, a prefix of the arrival order, all of $\mathcal C_\ell$, or a
prefix of the reverse-departure order. Hence
Lemma~\ref{lem:two-orders}(b) bounds the first excess by
$\sigma_\ell/2$, and the rounding adds less than
$Q\cdot\sigma_\ell/(2Q)=\sigma_\ell/2$.

Given $x\in E^G$, choose $y$ on every class by
Lemma~\ref{lem:two-orders}(a) for the restriction of $x$; this one choice
serves all times and machines. Then $w^y_j(F)<w^x_j(F)+\sigma_\ell$, so the
merged jobs of $\mathcal C_\ell$ contribute less than
$w^x_j(F)+2\sigma_\ell$ to $\widehat g_{y,j}(t)$ when $\mathcal C_\ell$ is
active at $t$, and nothing otherwise. Summing over the classes and
using~\eqref{eq:class-bounds} proves~\eqref{eq:merged-comparison}.

\section{A dynamic program on the time line}\label{sec:dp}
Fix a target $T>0$. Divide all weights by $T$, and reuse their notation for
the normalized weights. Reject the target if $\max_t W(t)>r$. Otherwise
every group has total weight at most $r$, because all its jobs are active
on its common slab. Set
\begin{equation}\label{eq:parameters}
\begin{gathered}
 D=1+\ceil{\log_2(2n)},\qquad
 \eta=\frac{\varepsilon}{4r},\qquad
 \tau=\frac{\varepsilon}{8D},\\
 Q=\ceil{(1+128/\eta)^2}\bigl(1+\floor{\log_2(1+2D)}\bigr),\qquad
 \lambda=\frac{\tau}{4Q}.
\end{gathered}
\end{equation}
Since $\eta w(G)/\tau\le\eta r/\tau=2D$ for every group $G$, the integer
$Q$ satisfies~\eqref{eq:block-count}. So Lemma~\ref{lem:merging} gives every
group at most $Q=O(\eta^{-2}\log(2+D))$ merged jobs, whose weights are
multiples of $\lambda$. The parameters split the error budget $\varepsilon$.
Merging costs at most $2(\eta W_G(t)+\tau)$ for each group with an active
job. These groups partition the active jobs, so the relative terms add up
to at most $2\eta r=\varepsilon/2$. By Lemma~\ref{lem:tree}, at most $2D$
groups have an active job at any time, so the additive terms add up to at
most $4D\tau=\varepsilon/2$. The algorithm builds a tree of groups with their
merged jobs and then decides by dynamic programming whether all merged
jobs can be assigned so that every load stays at most $1+\varepsilon$.

\subsection{The separator tree}
A \emph{region} $P=[l,u]$ is a consecutive range of slab indices; its
\emph{internal jobs} are the jobs whose entire interval lies in these
slabs. The \emph{separator tree} has a region at each node, and its root
contains all slabs and all jobs. Every node removes one or two of its
slabs, forms groups from the internal jobs that meet them, and passes the
other internal jobs to at most three children. The merged jobs of the
groups at the ancestors of $P$ are the \emph{ancestor merged jobs} of $P$.
Each of them contains the common slab of its group, which lies outside
$P$. So its interval meets $P$ in a proper nonempty prefix of $P$, a
proper nonempty suffix of $P$, all of $P$, or nothing. In the first two
cases it \emph{ends inside} $P$, at a boundary $\zeta$ between slabs
$\zeta-1$ and $\zeta$ of $P$, where $l<\zeta\le u$. In the third case it
\emph{covers} $P$, and in the fourth it \emph{misses} $P$.

The node removes the middle slab $\floor{(l+u)/2}$ and the \emph{median
slab} $\zeta$, where $\zeta$ is the lower median of the boundaries at which
ancestor merged jobs end inside $P$, counted with multiplicity. If none ends
inside $P$, the middle slab also serves as the median slab. The remaining
slabs form at most three children. One group consists of the internal jobs
that meet the median slab, and a second group of those that meet the
middle slab but not the median slab. Discard empty groups, and let
$\mathcal G(P)$ be the at most two remaining ones. Every other internal job
lies in one child, since an interval that meets two children crosses a
removed slab. Compute the merged jobs of the new groups by
Lemma~\ref{lem:merging}, and recurse. The whole tree, with all merged jobs,
is fixed before any machine is chosen. Figure~\ref{fig:separator-tree}
shows an example.

\begin{lemma}[Tree bounds]\label{lem:tree}
The tree has at most $2n-1$ nodes and at most $D$ levels, and its groups
partition the jobs. At any time at most $2D$ groups have an active job,
all on one root-to-leaf path. At most $4Q$ ancestor merged jobs end
inside any region.
\end{lemma}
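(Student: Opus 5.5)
We prove the four claims in order. Each depends on one of two facts: the middle-slab cut halves the region, and the median-slab cut splits the ending boundaries evenly.

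\emph{Node count and depth.} Every node removes at least one slab, here the middle slab, and the regions of the children are disjoint ranges of the remaining slabs. So the regions at the nodes are disjoint unions of removed slabs. The root has $2n-1$ slabs, hence there are at most $2n-1$ nodes. For the depth, a region of $s$ slabs with its middle slab $\floor{(l+u)/2}$ removed leaves each side with at most $\floor{s/2}$ slabs. Removing the median slab only shrinks the pieces further. So a child of a region with $s$ slabs has at most $\floor{s/2}$ slabs. Starting from $2n-1<2^{\ceil{\log_2(2n)}}$, after $D-1=\ceil{\log_2(2n)}$ halvings no slab remains. Hence there are at most $D$ levels.

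\emph{Partition and active groups.} Every job is internal at the root. A job internal at $P$ either meets a removed slab and joins one of the groups of $\mathcal G(P)$, or lies in exactly one child. The case split into ``meets the median slab'' and ``meets the middle slab but not the median slab'' makes the two groups disjoint. Since the tree is finite, every job ends in exactly one group, so the groups partition the jobs. Now fix a time $t$, lying in some slab $s$, and suppose a group at node $P$ has a job active at $t$. That job is internal to $P$, so $s\in P$. The regions containing slab $s$ form a single root-to-leaf path, because children are disjoint. Each node has at most two groups, and there are at most $D$ levels, so at most $2D$ groups have an active job.

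\emph{Ancestor merged jobs ending inside a region.} We use induction down the tree, aiming for the bound $4Q$. Let $P$ be a region and $P'$ a child of $P$. The ancestor merged jobs of $P'$ are those of $P$ together with the at most $2Q$ merged jobs of the groups $\mathcal G(P)$. These at most $2Q$ new merged jobs have two endpoints each, so they contribute at most $4Q$ ending boundaries; we will have to tighten this below. Among the old ancestor merged jobs, count the boundaries inside $P$ with multiplicity, and call this number $m$. Every boundary inside $P'$ is also inside $P$. Since $P'$ lies entirely on one side of the median slab $\zeta$, the lower-median choice puts at most about $m/2$ of these boundaries in $P'$.

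The naive recurrence $m'\le m/2+4Q$ only gives $m\le 8Q$, so we refine the count of new boundaries. A new group at $P$ meets the middle slab or the median slab, and its merged jobs contain that removed slab. Such a merged job has at most one endpoint on each side of the common slab. Hence in the child $P'$, which lies on one side, each new merged job ends at most once, giving at most $2Q$ new boundaries. With this count the recurrence becomes $m'\le\floor{m/2}+2Q$. Starting from $m=0$ at the root, it keeps $m\le 4Q$ at every node, by induction.

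\emph{Main obstacle.} The step that needs the most care is the median bookkeeping. We must check that boundaries exactly at the removed median slab $\zeta$, and at the edges of the removed slab, are not counted in either child. We must also handle correctly the case in which no boundary ends inside $P$, where the middle slab serves as the median. Once those checks are done, the halving $m'\le\floor{m/2}+2Q$ and the induction go through.
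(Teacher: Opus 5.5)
Your proof is correct and follows the paper's argument. You get the node and depth bounds from slab removal and halving at the middle slab, and the active-group bound from disjointness of the regions on a level. The $4Q$ bound comes from the recurrence $\floor{M/2}+2Q$: the lower median splits the old endpoints, and the at most $2Q$ new merged jobs each end at most once in a child because each contains a removed slab.

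The checks you defer in your last paragraph are immediate. A child left of slab $\zeta$ has all its interior boundaries below $\zeta$, and a child right of it has all of them above $\zeta+1$. So merged jobs ending at $\zeta$ or $\zeta+1$ cover or miss every child, which is exactly how the paper handles them.
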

\begin{proof}
Removing the middle slab leaves at most half of the slabs of a region,
rounded down, in each child. Every node removes at least one slab, and
every slab is removed at exactly one node. This gives the depth and node
bounds. Every internal job joins a group at its node or passes to exactly
one child, so the groups partition the jobs. The jobs of a group lie
inside its region, and the regions of one level are disjoint, which gives
the bound on active groups.

Suppose that $M$ ancestor merged jobs end inside $P$, at boundaries with
lower median $\zeta$. A merged job that ends at boundary $\zeta$ or
$\zeta+1$ has its endpoint at the left or right edge of the removed slab
$\zeta$, so it covers or misses every child. At most $\floor{M/2}$ of the
others end at boundaries below $\zeta$, and they can end inside only
children left of slab $\zeta$; at most $\floor{M/2}$ end at boundaries
above $\zeta+1$, and they can end inside only children right of slab
$\zeta$. The at most two new groups add at most $2Q$ merged jobs. Thus at
most $M/2+2Q$ ancestor merged jobs end inside any child, and induction
from zero at the root gives $4Q$.
\end{proof}

\begin{figure}[ht]
\centering
\begin{tikzpicture}[x=.78cm,y=.6cm,font=\scriptsize,
 mc1/.style={black},mc2/.style={black!40}]
\def\slab#1#2#3{\ifnum#3=1 \fill[black!22] ({#1-1},#2) rectangle (#1,{#2+.75});\fi
 \draw[black!65] ({#1-1},#2) rectangle (#1,{#2+.75});
 \node at ({#1-.5},{#2+.375}) {#1};}
\def\mjob#1#2#3#4#5{\draw[mc#5,line width=2pt] ({#2-1},#4)--(#3,#4);
 \node[anchor=east] at ({#2-1.1},#4) {$#1$};}
\node[anchor=east] at (-.1,.375) {root};
\foreach \i in {1,...,15}{\ifnum\i=8 \slab{\i}{0}{1}\else\slab{\i}{0}{0}\fi}
\node[anchor=south] at (7.5,.75) {middle};
\node[anchor=east] at (-.1,-1.225) {$P$};
\foreach \i in {1,3,5,6,7}{\slab{\i}{-1.6}{0}}
\foreach \i in {2,4}{\slab{\i}{-1.6}{1}}
\node[anchor=south] at (1.5,-.85) {median};
\node[anchor=south] at (3.5,-.85) {middle};
\mjob{S_1}{2}{7}{-2.0}{1}\mjob{S_2}{2}{7}{-2.42}{2}\mjob{S_3}{2}{7}{-2.84}{2}\mjob{S_4}{6}{7}{-3.26}{1}
\node[anchor=east] at (-.1,-4.075) {children};
\foreach \i in {1,3,5,6,7}{\slab{\i}{-4.45}{0}}
\mjob{S_4}{6}{7}{-4.8}{1}
\end{tikzpicture}
\caption{A separator tree on 15 slabs. The root has no ancestor merged
jobs and removes only its middle slab. The merged jobs $S_1$--$S_4$ of its
groups contain slab 8 and end inside its left child $P$; they are shown in
black on machine~1 and in gray on machine~2. Counted with multiplicity,
their boundaries $2,2,2,6$ have lower median 2, so $P$ removes slabs 2
and 4. Only $S_4$ ends inside a child of $P$.}
\label{fig:separator-tree}
\end{figure}

\subsection{States and reserved loads}\label{sec:states}
At a region $P$, all ancestor merged jobs are known, with their blocks,
weights, and intervals. A \emph{state} chooses a machine for every ancestor
merged job that ends inside $P$, and stores a constant load vector
$\kappa\in\lambda\mathbb Z_{\ge0}^{r}$. Its \emph{reserved load} on machine
$j$ is
\begin{equation}\label{eq:reserved}
 R_j(t)=\kappa_j
  +\sum_{\substack{S\text{ ends inside }P,\ t\in[a_S,b_S)\\
  S\text{ on machine }j}}\widehat w_S,
\end{equation}
where $S$ runs over the blocks of the ancestor groups. The vector $\kappa$
accounts for the ancestor merged jobs that cover $P$. A state with
$0\le\kappa\le(1+\varepsilon)\one$ and $R(t)\le(1+\varepsilon)\one$ on $P$ is
\emph{admissible}. Its table entry records whether the merged jobs of the
groups in the subtree of $P$ can be assigned so that $R$ plus their merged
loads stays at most $(1+\varepsilon)\one$ on $P$. At the root we query the
state $R=0$.

\subsection{Transitions by exact restriction}\label{sec:transitions}
For a state at $P$, try every assignment $y_G$ of the merged jobs of the
groups $G\in\mathcal G(P)$, and write $\widehat g_{y_G}(t)$ for the vector of
merged loads $(\widehat g_{y_G,j}(t))_j$. Test on all slabs of $P$ whether
\begin{equation}\label{eq:capacity}
 R(t)+\sum_{G\in\mathcal G(P)}\widehat g_{y_G}(t)\le(1+\varepsilon)\one.
\end{equation}
For each child $P'$, keep the machines of the merged jobs that end inside
$P'$, add the weight of every merged job that covers $P'$ to $\kappa$ on
its machine, and drop the merged jobs that miss $P'$. The ancestor merged
jobs of $P'$ are those of $P$ and those of the groups in $\mathcal G(P)$, so
the child state represents exactly
\begin{equation}\label{eq:exact-restriction}
 R'=\Bigl(R+\sum_{G\in\mathcal G(P)}\widehat g_{y_G}\Bigr)\Big|_{P'}.
\end{equation}
The new $\kappa$ lies in $\lambda\mathbb Z_{\ge0}^r$, because all merged
weights are multiples of $\lambda$. If~\eqref{eq:capacity} holds, then
$R'\le(1+\varepsilon)\one$ on $P'$ by~\eqref{eq:exact-restriction}, and the
new $\kappa$ is at most $R'(t)$ at every slab $t$ of $P'$, so the child
state is admissible. Accept the parent state if some tried assignment
passes~\eqref{eq:capacity} and all its child entries are accepted; at a
leaf, the capacity test decides. Store the choices and child pointers for
reconstruction.

In Figure~\ref{fig:separator-tree}, four ancestor merged jobs
$S_1,\ldots,S_4$ end inside $P=[1,7]$, at boundaries $2,2,2,6$, and all
four miss the child $[1,1]$. The state of $[3,3]$ adds the weights of
$S_1,S_2,S_3$ to $\kappa$. The state of $[5,7]$ adds them to $\kappa$ as
well and keeps the machine of $S_4$, which ends inside $[5,7]$.

\begin{lemma}[Table entries]\label{lem:table}
An admissible state at $P$ is accepted if and only if the merged jobs of
the groups in the subtree of $P$ have an assignment, with merged loads
$\widehat g_{y_G}$, such that
\begin{equation}\label{eq:table}
 R(t)+\sum_{G\text{ in the subtree of }P}\widehat g_{y_G}(t)
 \le(1+\varepsilon)\one\qquad(t\in P).
\end{equation}
\end{lemma}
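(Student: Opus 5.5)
We argue by induction over the separator tree, from the leaves to the root. The induction hypothesis is the statement of Lemma~\ref{lem:table} for every admissible state at every child. The recursion accepts a parent state if and only if some tried assignment $y_G$ of the merged jobs of $\mathcal G(P)$ passes~\eqref{eq:capacity} and every child entry is accepted. The proof therefore reduces to showing that the target condition~\eqref{eq:table} splits into the same two parts.

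The split is as follows. The slabs of $P$ are the removed slabs (middle and median) together with the slabs of the children $P'$. Every group in the subtree of $P$ other than those in $\mathcal G(P)$ lies in the subtree of exactly one child $P'$. Its jobs lie inside the region of that child, and hence so does its merged interval $[a_S,b_S)$, since this is the union of the job intervals. Consequently, on a slab $t$ that is removed at $P$ or lies in another child, only $R$ and the $\widehat g_{y_G}$ with $G\in\mathcal G(P)$ contribute. On a slab $t\in P'$, the left side of~\eqref{eq:table} equals
\[
R'(t)+\sum_{G\text{ in the subtree of }P'}\widehat g_{y_G}(t),
\]
where $R'$ is given by~\eqref{eq:exact-restriction}. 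The identity~\eqref{eq:exact-restriction} is exact: the merged jobs ending inside $P'$ keep their machines, those covering $P'$ contribute a constant that is moved into $\kappa$, and those missing $P'$ contribute zero on $P'$.

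For the direction ``accepted $\Rightarrow$ assignment'', take the tried $y_{\mathcal G(P)}$ that passed, together with the assignments that the induction hypothesis provides for the children. These assignments involve disjoint sets of merged jobs, so they combine into one assignment. Then~\eqref{eq:capacity} gives~\eqref{eq:table} on the removed slabs, and the child inequalities give it on each $P'$ by the identity above. For the converse, restrict a given assignment to $\mathcal G(P)$. This restriction is among the tried ones, since all assignments are enumerated. On removed slabs~\eqref{eq:capacity} is exactly~\eqref{eq:table}. On the slabs of a child, the omitted subtree loads are nonnegative, so~\eqref{eq:table} implies~\eqref{eq:capacity} there as well. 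As argued in Section~\ref{sec:transitions}, the resulting child state is admissible. Its restricted assignment satisfies the child version of~\eqref{eq:table}, so by the induction hypothesis the child accepts. At a leaf, there are no children and the subtree groups are exactly $\mathcal G(P)$, so the capacity test coincides with~\eqref{eq:table}.

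The main point needing care is the locality claim: merged jobs of descendant groups live inside their own child region, and every ancestor merged job is represented on $P'$ exactly by the child state. The first part follows because each block's interval is the union of intervals of internal jobs of that child. The second part is the case analysis covers/ends inside/misses stated in the definition of the tree. Everything else is bookkeeping.
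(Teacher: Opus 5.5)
Your proposal is correct and follows the paper's own proof. Both argue by induction over the separator tree: \eqref{eq:capacity} covers the removed slabs, where descendant merged jobs are inactive; \eqref{eq:exact-restriction} plus the induction hypothesis covers each child; and nonnegativity of loads together with admissibility of the restricted child states gives the converse, with the forward direction likewise needing those child states to be admissible before the hypothesis applies.
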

\begin{proof}
Induct on the subtree. For an accepted state, the capacity test gives
\eqref{eq:table} on the removed slabs, where the merged jobs of
descendant groups are inactive. On a child, the induction hypothesis and
\eqref{eq:exact-restriction} give the same inequality. The removed slabs
and the children partition $P$.

Conversely, an assignment satisfying~\eqref{eq:table} passes the capacity
test because all loads are nonnegative. Its restricted child states are
admissible, and the remaining choices satisfy the child inequalities.
Induction accepts every child and then the parent. At a leaf, the
capacity test is the whole inequality.
\end{proof}

\section{Approximation and running time for one resource}\label{sec:analysis}
\subsection{The error bound}
\begin{lemma}[Target test]\label{lem:target}
Acceptance at target $T$ returns an assignment of peak at most
$(1+\varepsilon)T$. Every target $T\ge\OPT$ is accepted.
\end{lemma}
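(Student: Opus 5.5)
Both claims follow from Lemma~\ref{lem:table} at the root, combined with the two parts of Lemma~\ref{lem:merging}.

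\emph{Soundness.} Suppose the root state $R=0$ is accepted. Lemma~\ref{lem:table} then gives merged assignments $y_G$ for all groups with
\[
\sum_G\widehat g_{y_G}(t)\le(1+\varepsilon)\one
\]
in normalized weights. The tree has finitely many nodes, so following the stored choices and child pointers reconstructs this assignment. Let $y$ place every job of a block on the machine of its merged job. The groups partition the jobs by Lemma~\ref{lem:tree}, so the load of $y$ on machine $j$ at time $t$ equals $\sum_G g_{y,j}(t)$. By Lemma~\ref{lem:merging}(a), this is at most $\sum_G\widehat g_{y_G,j}(t)\le1+\varepsilon$. Undoing the normalization gives peak at most $(1+\varepsilon)T$. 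Restoring the weight-zero jobs changes no load. The cases $r=1$ and the empty instance are handled directly.

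\emph{Completeness.} Let $T\ge\OPT$ and let $x$ be an optimal assignment. Its normalized loads are at most $\OPT/T\le1$. Hence $W(t)\le r$ for all $t$, and the target is not rejected. For every group $G$, apply Lemma~\ref{lem:merging}(b) to the restriction of $x$ to $G$. This gives $y_G$ with
\[
\widehat g_{y_G,j}(t)\le g_{x,j}(t)+2\bigl(\eta W_G(t)+\tau\bigr)
\]
for all $t$ and $j$.

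Fix $t$ and sum over $G$. We need the additive term only for groups with a job active at $t$. When no job of $G$ is active at $t$, both $g_{x,j}(t)$ and $\widehat g_{y_G,j}(t)$ vanish, because a merged job is active at $t$ only if some job of its block is. This was shown in the Loads step: a block's merged job is active exactly when the block meets $F$. So for inactive groups we use the trivial bound $0\le0$ instead of the lemma.

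By Lemma~\ref{lem:tree}, at most $2D$ groups have an active job at $t$. Summing over them gives
\[
\sum_G\widehat g_{y_G,j}(t)\le\sum_G g_{x,j}(t)+2\eta\sum_G W_G(t)+4D\tau.
\]
The first sum on the right is the load of $x$, at most $1$. Since the groups partition the jobs, $\sum_G W_G(t)=W(t)\le r$, so the middle term is at most $2\eta r=\varepsilon/2$. The last term equals $4D\tau=\varepsilon/2$. The total is at most $1+\varepsilon$.

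Lemma~\ref{lem:table} applied to the root state $R=0$ then shows that this state is accepted. This state is admissible, since $\kappa=0$ and $R=0$.

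The main obstacle is this accounting step, restricting the additive error $2\tau$ to groups that are active at $t$. It rests on the observation that a merged job's interval is the union of its jobs' intervals, so it is active only when some job of its block is.
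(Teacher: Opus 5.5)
Your proof is correct and follows the paper's argument essentially step for step. Soundness comes from Lemma~\ref{lem:table} at the root state $R=0$, together with \eqref{eq:merged-upper} and the fact that the groups partition the jobs; completeness comes from choosing $y_G$ by \eqref{eq:merged-comparison} in every group and charging the additive $2\tau$ only to the at most $2D$ groups with an active job (merged jobs of the other groups are inactive), so the total is $1+2\eta r+4D\tau=1+\varepsilon$ exactly as in \eqref{eq:total-error}.
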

\begin{proof}
Use normalized units. For an accepted root, follow the stored choices to
an assignment of all merged jobs. By Lemma~\ref{lem:table}, their merged
loads are at most $(1+\varepsilon)\one$. Place every job on the machine of
its merged job. By~\eqref{eq:merged-upper}, its loads are at most the
merged loads. The groups partition the jobs, which proves the first
assertion.

For the second assertion, suppose $\OPT\le1$, and let $x$ be an assignment
of peak at most $1$. Then $W(t)\le r$, so the target is not rejected. In
every group choose the assignment $y_G$ of its merged jobs
that~\eqref{eq:merged-comparison} gives for the restriction of $x$. Because
the groups partition the jobs, $\sum_GW_G(t)=W(t)\le r$, and by
Lemma~\ref{lem:tree} at most $2D$ groups have an active job at $t$; the
merged jobs of the other groups are inactive at $t$. Writing $L_x(t)$ for
the vector of loads of $x$, we get
\begin{align}
 \sum_G \widehat g_{y_G}(t)
 &\le L_x(t)+2\sum_{G:\,W_G(t)>0}(\eta W_G(t)+\tau)\one\notag\\
 &\le\bigl(1+2\eta r+4D\tau\bigr)\one=(1+\varepsilon)\one.
 \label{eq:total-error}
\end{align}
By Lemma~\ref{lem:table}, the root is accepted.
\end{proof}

\subsection{Counting states and choices}\label{sec:running-time}
A node has at most two new groups with at most $Q$ merged jobs each, so it
tries at most $r^{2Q}$ assignments per state. By Lemma~\ref{lem:tree}, at
most $4Q$ ancestor merged jobs end inside a region, which allows at most
$r^{4Q}$ machine choices. Each of the $r$ coordinates of $\kappa$ has at
most $1+\floor{(1+\varepsilon)/\lambda}$ values. Thus the number of states
is at most
\begin{equation}\label{eq:state-count}
 r^{4Q}\left(1+\floor{\frac{1+\varepsilon}{\lambda}}\right)^r,
 \qquad \lambda=\frac{\varepsilon}{32DQ}.
\end{equation}
Let $A=\eta^{-2}\log(2r)+r$. Since $Q=O(\eta^{-2}\log(2+D))$, the machine
choices contribute at most $2^{O(A\log(2+D))}$, and the coordinates of
$\kappa$ contribute at most a factor depending only on $r$ and
$\varepsilon$ times $(2+D)^{O(r)}$. Each transition sweeps the slabs,
performs at most three child lookups, and updates machines and constant
loads by exact arithmetic. Even with linear table scans for lookups, only
a fixed power of these bounds is needed. Over at most $2n-1$ regions the
target test therefore takes
\begin{equation}\label{eq:time-before-absorption}
 (2+D)^{O(A)}\poly(N)
\end{equation}
bit operations, times a computable function of $r$ and $\varepsilon$.
Appendix~\ref{app:arithmetic} describes the exact arithmetic.

For every real $\omega\ge1$ and integer $n\ge1$,
\begin{equation}\label{eq:absorption}
 (1+\log n)^\omega\le\omega^\omega n.
\end{equation}
Indeed, maximizing $(1+z)^\omega e^{-z}$ over $z\ge0$ gives
$\omega^\omega e^{1-\omega}\le\omega^\omega$. Since $D=O(1+\log n)$,
\eqref{eq:absorption} absorbs the entire depth-dependent factor of
\eqref{eq:time-before-absorption} into a function of $r$ and
$\varepsilon$ times $n$, so the polynomial degree is absolute. Here
$\eta^{-1}=4r/\varepsilon$ and $A=O((r/\varepsilon)^2\log(2r))$, so this
function is
$2^{O(A\log(2+A))}=2^{O((r/\varepsilon)^2\log(2r)\log(2r/\varepsilon))}$.

\subsection{Choosing the target}
Algorithm~\ref{alg:scheme} summarizes the construction.
\begin{algorithm}[tb]
\caption{Approximation scheme for identical machines and one resource}
\label{alg:scheme}
\begin{algorithmic}[1]
\Require Jobs $(I_i,w_i)$, machine count $r$, accuracy $\varepsilon$
\State Make the endpoints distinct; remove zero-weight jobs; handle the empty instance and $r=1$.
\State Compute $\mathrm{LB}$.
\For{$\nu=0,\ldots,\ceil{(r-1)/\varepsilon}$}
  \State Set $T=\mathrm{LB}(1+\nu\varepsilon)$; use a fresh copy of the instance scaled by $1/T$.
  \If{$\max_t W(t)\le r$}
    \State Set the parameters~\eqref{eq:parameters}; build the separator tree and the merged jobs.
    \For{each region $P$ in bottom-up order}
      \State Enumerate admissible states and initialize their entries to false.
      \For{each state and assignment of the merged jobs of $\mathcal G(P)$ passing~\eqref{eq:capacity}}
        \State Compute each child state by~\eqref{eq:exact-restriction}.
        \If{all child entries are true}
          \State Mark the state true and store its choices and child pointers.
        \EndIf
      \EndFor
    \EndFor
    \If{the root state $R=0$ is true}
      \State Reconstruct its assignment, restore zero-weight jobs, and return.
    \EndIf
  \EndIf
\EndFor
\end{algorithmic}
\end{algorithm}

\begin{theorem}\label{thm:scalar}
Offline temporary tasks assignment on $r\ge2$ identical machines has
a deterministic $(1+\varepsilon)$-approximation in
\begin{equation}\label{eq:scalar-time}
 2^{O((r/\varepsilon)^2\log(2r)\log(2r/\varepsilon))}N^{O(1)}
\end{equation}
bit operations for rational $0<\varepsilon\le1$.
\end{theorem}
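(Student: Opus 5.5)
The proof assembles Lemma~\ref{lem:target} with the running-time count of Section~\ref{sec:running-time}, applied to the target search of Algorithm~\ref{alg:scheme}.

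\emph{Correctness.} After the preprocessing (distinct endpoints, removal of zero-weight jobs, direct handling of the empty instance), \eqref{eq:LB} gives $\mathrm{LB}\le\OPT\le r\,\mathrm{LB}$. The targets $T_\nu=\mathrm{LB}(1+\nu\varepsilon)$ for $\nu=0,\ldots,\ceil{(r-1)/\varepsilon}$ reach at least $r\,\mathrm{LB}\ge\OPT$. Let $\nu^*$ be the least index with $T_{\nu^*}\ge\OPT$.
\begin{itemize}
\item By the second assertion of Lemma~\ref{lem:target}, target $T_{\nu^*}$ is accepted, so the loop stops at some $\nu\le\nu^*$.
\item If $\nu^*=0$, the returned peak is at most $(1+\varepsilon)\mathrm{LB}\le(1+\varepsilon)\OPT$.
\item If $\nu^*\ge1$, then $T_{\nu^*-1}<\OPT$ and $\OPT\ge\mathrm{LB}$, so
\[
T_{\nu^*}=T_{\nu^*-1}+\varepsilon\,\mathrm{LB}<(1+\varepsilon)\OPT .
\]
\end{itemize}
The first assertion of Lemma~\ref{lem:target} then bounds the peak by $(1+\varepsilon)T_{\nu^*}\le(1+\varepsilon)^2\OPT$. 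To remove the square, I would run the whole scheme with $\varepsilon/3$ in place of $\varepsilon$, using $(1+\varepsilon/3)^2\le1+\varepsilon$ for $\varepsilon\le1$. This only changes constants inside the $O(\cdot)$ of~\eqref{eq:scalar-time}. Restoring zero-weight jobs and undoing the endpoint ranking keep the peak, by Section~\ref{sec:preliminaries}.

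\emph{Running time.} There are $O(r/\varepsilon)$ targets, and each rescaling and computation of $\max_tW(t)$ is $\poly(N)$, since the numbers $T_\nu$ have polynomial bit length. Building the tree and merged jobs takes polynomial time:
\begin{itemize}
\item Lemma~\ref{lem:merging} is applied to at most $2(2n-1)$ groups.
\item Its inputs $\eta,\tau$ from~\eqref{eq:parameters} have bit length $O(\log(nr/\varepsilon))$.
\end{itemize}
The dynamic program costs~\eqref{eq:time-before-absorption}. The main step is to absorb its depth-dependent factor $(2+D)^{O(A)}$ with~\eqref{eq:absorption}, taking $\omega=O(A)$. Since $2+D\le c(1+\log n)$ for an absolute $c$, we get
\[
(2+D)^{O(A)}\le c^{O(A)}\,O(A)^{O(A)}\,n .
\]
With $A=O((r/\varepsilon)^2\log(2r))$, this is $2^{O(A\log(2+A))}n$, which is the function in~\eqref{eq:scalar-time}. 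The $\kappa$ factor $(1+(1+\varepsilon)/\lambda)^r$ from~\eqref{eq:state-count} is handled the same way: with $\lambda^{-1}=O(DQ/\varepsilon)$, it splits into $(r/\varepsilon)^{O(r)}$ times $(2+D)^{O(r)}$, and the latter is absorbed by the same step.

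The step I expect to be most delicate is the bookkeeping that keeps every $\poly(N)$ factor at an absolute degree. Three costs need checking:
\begin{itemize}
\item the exact rational arithmetic on block weights, which are multiples of $\lambda$ and bounded by $1+\varepsilon$ (Appendix~\ref{app:arithmetic});
\item the table lookups;
\item the reconstruction of the assignment.
\end{itemize}
For each, I would verify that the cost is a fixed power of (number of states) $\times$ (number of choices) $\times\,\poly(N)$. This ensures that after absorption only $n^{O(1)}\le N^{O(1)}$ remains, with an absolute exponent.
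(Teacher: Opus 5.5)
Your proposal is correct and follows the paper's proof: you use the same target grid $T_\nu=\mathrm{LB}(1+\nu\varepsilon)$, accept the first target at least $\OPT$ by Lemma~\ref{lem:target}, remove the $(1+\varepsilon)^2$ factor by running with $\varepsilon/3$, and take the per-target cost from Section~\ref{sec:running-time}, absorbing the depth factor via~\eqref{eq:absorption}. One small slip: for an arbitrary rational $\varepsilon$, the bit length of $\eta$ and $\tau$ is polynomial in $N$ rather than $O(\log(nr/\varepsilon))$, but this does not affect the argument.
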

\begin{proof}
Test in increasing order
\begin{equation}\label{eq:target-grid}
 T_\nu=\mathrm{LB}(1+\nu\varepsilon),
 \qquad \nu=0,\ldots,\ceil{(r-1)/\varepsilon}.
\end{equation}
The grid reaches $r\,\mathrm{LB}\ge\OPT$. Its first target at least
$\OPT$ is at most $(1+\varepsilon)\OPT$, because the spacing is
$\varepsilon\,\mathrm{LB}\le\varepsilon\OPT$. That target is accepted by
Lemma~\ref{lem:target}. Stop at the first acceptance and reconstruct the
assignment. Its peak is at most
$(1+\varepsilon)^2\OPT\le(1+3\varepsilon)\OPT$, so running
Algorithm~\ref{alg:scheme} with $\varepsilon/3$ in place of $\varepsilon$
gives a $(1+\varepsilon)$-approximation. There are $O(r/\varepsilon)$
targets, each with polynomial encoding length, and
Section~\ref{sec:running-time} bounds each target test
by~\eqref{eq:scalar-time}.
\end{proof}

\section{Several resources and unrelated machines}\label{sec:general}
We now treat the general model and prove Theorem~\ref{thm:main}. Every job
gets a scalar weight, and its machine-dependent resource vectors are
rounded into a bounded number of types. Within a type, all resource
requirements are fixed multiples of the scalar weight. Sections~\ref{sec:blocks}
and~\ref{sec:merged} therefore apply within each type, and the dynamic
program of Sections~\ref{sec:dp} and~\ref{sec:analysis} assigns the merged
jobs of all types together, with the changes described below.

\subsection{A scalar weight for each job}
In the general model, the weight of job $i$ and the active weight are
\begin{equation}\label{eq:baseline}
 q_{ij}=\sum_{k=1}^d p_{ijk},\qquad
 w_i=\min_{j\in E_i}q_{ij},\qquad
 W(t)=\sum_{i:\,t\in I_i}w_i.
\end{equation}
For $d=1$ and identical machines, they are the $w_i$ and $W$ of
Section~\ref{sec:preliminaries}. Fix a minimizing machine $j_i^*$ for every
job. If $w_i=0$, assigning $i$ to $j_i^*$ contributes zero to every load,
so remove it and restore that choice at the end. We henceforth have
$w_i>0$. Empty instances and $r=1$ are solved directly. For any assignment
$x$ of peak at most $T$,
\begin{equation}\label{eq:baseline-bound}
 W(t)\le \sum_{i:\,t\in I_i}q_{i,x_i}
       =\sum_{j=1}^r\sum_{k=1}^d L_{x,jk}(t)\le rdT.
\end{equation}
Thus $W$ depends only on the input, and $\max_tW(t)\le rd\,\OPT$. In place
of~\eqref{eq:LB}, we use the computable lower bound
\begin{equation}\label{eq:LB-general}
 \mathrm{LB}=\max\Bigl\{\max_i\min_{j\in E_i}\|p_{ij}\|_\infty,\
 \frac1{rd}\max_t W(t)\Bigr\},
\end{equation}
which satisfies $\mathrm{LB}\le\OPT\le rd\,\mathrm{LB}$: the lower bound
follows from one job and~\eqref{eq:baseline-bound}, and assigning every job
to $j_i^*$ gives each resource load at most $W(t)$.

\subsection{Rounding the coefficients}
Fix a rational $0<\varepsilon\le1$ and put
\begin{equation}\label{eq:s}
 s=\ceil{rd/\varepsilon},\qquad
 E_i'=\{j\in E_i:q_{ij}\le s w_i\}.
\end{equation}
Every $E_i'$ contains $j_i^*$. For $j\in E_i'$ define
\begin{equation}\label{eq:round-coefficients}
 \alpha_{ijk}=\frac1s\ceil{\frac{s p_{ijk}}{w_i}},
 \qquad \widehat p_{ijk}=w_i\alpha_{ijk}.
\end{equation}
Each $\alpha_{ijk}$ belongs to $\{0,1/s,\ldots,s\}$, since
$p_{ijk}\le q_{ij}\le s w_i$. For bookkeeping, put $\alpha_{ijk}=0$ when
$j\notin E_i'$. The \emph{rounded instance} has the eligibility sets
$E_i'$ and the demands $\widehat p_{ijk}$; its loads are the \emph{rounded
loads}.

\begin{lemma}[Coefficient reduction]\label{lem:types}
If the original instance admits an assignment of peak at most $T$, the
rounded instance admits an assignment of peak at most
$(1+2\varepsilon)T$. Every assignment of the rounded instance has original
loads at most its rounded loads. The jobs have at most
\begin{equation}\label{eq:type-count}
 2^r(s^2+1)^{rd}
\end{equation}
different pairs $(E_i',(\alpha_{ijk})_{j,k})$.
\end{lemma}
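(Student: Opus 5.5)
The plan is to prove the three assertions separately, starting with the easiest two.

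\emph{Original loads versus rounded loads.} Every $j\in E_i'$ satisfies $\alpha_{ijk}\ge p_{ijk}/w_i$ because of the ceiling. Hence $\widehat p_{ijk}\ge p_{ijk}$. An assignment of the rounded instance uses only machines in $E_i'\subseteq E_i$, so it is also an assignment of the original instance. Summing the inequality over the active jobs on each machine gives the second assertion pointwise in $t$, $j$, $k$.

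\emph{Counting the pairs.} The set $E_i'$ is one of $2^r$ subsets. Each $\alpha_{ijk}$ lies in $\{0,1/s,\ldots,s\}$, which has $s^2+1$ values, and there are $rd$ coefficients, using the convention $\alpha_{ijk}=0$ off $E_i'$. This gives the bound $2^r(s^2+1)^{rd}$.

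\emph{The approximation claim.} Fix an original assignment $x$ of peak at most $T$. I would first modify it into $x'$ by moving every job with $x_i\notin E_i'$, that is, with $q_{i,x_i}>sw_i$, to $j_i^*\in E_i'$. Then I would bound the rounded loads of $x'$. The rounding adds at most $w_i/s$ per coefficient: $\widehat p_{ijk}\le p_{ijk}+w_i/s$. So at time $t$, machine $j$ and resource $k$, the rounded load of $x'$ is at most
\[
L_{x',jk}(t)+\frac1s\sum_{i:\,t\in I_i}w_i=L_{x',jk}(t)+\frac{W(t)}{s}.
\]
By \eqref{eq:baseline-bound}, $W(t)\le rdT$, so with $s\ge rd/\varepsilon$ the rounding error is at most $\varepsilon T$.

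It remains to bound the load moved onto the machines $j_i^*$, and this is the step I expect to be the main obstacle. A moved job contributes at most $w_i$ to each resource of $j_i^*$, since $p_{i j_i^* k}\le q_{ij_i^*}=w_i$. Its original cost satisfied $q_{i,x_i}>sw_i$, so $w_i<q_{i,x_i}/s$. Summing over the moved jobs active at $t$ gives
\[
\sum w_i<\frac1s\sum_i q_{i,x_i}\le\frac1s\sum_{j,k}L_{x,jk}(t)\le\frac{rdT}{s}\le\varepsilon T.
\]
This bounds the total added load, so each coordinate also increases by at most $\varepsilon T$. The original loads of $x'$ are therefore at most $T+\varepsilon T$ on every coordinate, because removing jobs only decreases loads. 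Adding the rounding error gives $(1+2\varepsilon)T$.

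The care needed is in charging both error terms to $W(t)$, or to the total original load, at the same time $t$, rather than to an individual machine. I would also check that the rounding bound uses only jobs active at $t$, which holds because all of these sums are restricted to active jobs.
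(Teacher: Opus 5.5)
Your proposal is correct and follows essentially the same route as the paper. You move the jobs with $x_i\notin E_i'$ to $j_i^*$ at a cost of at most $\frac1s\sum q_{i,x_i}\le rdT/s\le\varepsilon T$, charge the rounding to $W(t)/s\le\varepsilon T$ via \eqref{eq:baseline-bound}, and read off domination and the type count from the ceiling and the coefficient range $\{0,1/s,\ldots,s\}$, differing from the paper only in the order in which you present the three assertions.
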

\begin{proof}
Fix an original assignment $x$ of peak at most $T$. Move every job with
$x_i\notin E_i'$ to $j_i^*$, changing its machine for its whole
interval. At time $t$, the sum of $w_i$ over the moved active jobs is
at most
\[
 \frac1s\sum_{\substack{i:\,t\in I_i\\x_i\notin E_i'}}q_{i,x_i}
 \le\frac{rdT}{s}\le\varepsilon T.
\]
For each moved job, every demand on its new machine is at most $w_i$.
Hence this change increases any resource load by at most $\varepsilon T$.
The resulting assignment uses $E_i'$.

Rounding increases each demand by less than $w_i/s$. By
\eqref{eq:baseline-bound}, it therefore adds at most
$W(t)/s\le\varepsilon T$ to any resource load of this assignment. This
proves the first assertion. The second follows from
$\widehat p_{ijk}\ge p_{ijk}$ on retained choices. There are at most
$2^r$ eligibility sets and $s^2+1$ possibilities for each of the $rd$
coefficients, giving~\eqref{eq:type-count}.
\end{proof}

Call each distinct pair in Lemma~\ref{lem:types} a \emph{type}, and let $K$
be the number present. For a type $c$, write $E_c$ for its eligible
machines and $\alpha_{cjk}$ for its coefficients. A job of type $c$ has
rounded demand $w_i\alpha_{cjk}$, so an assignment's scalar load within one
type determines all its resource loads by multiplication. The types are
computed once and are unchanged by scaling all demands by a common target
value.

\subsection{Merged jobs of one type}\label{sec:merged-resources}
A group now consists of jobs of one type $c$, with eligible machines
$E=E_c$, and Lemma~\ref{lem:merging} applies to their scalar weights. A
merged job of weight $\widehat w_S$ on machine $j\in E_c$ has the resource
vector $\widehat w_S(\alpha_{cj1},\ldots,\alpha_{cjd})$ on that machine
during its interval. For an assignment $y$ of the merged jobs, let
$H_y(t)\in\mathbb Q_{\ge0}^{rd}$ be their resource loads, and let
$\widehat L_{x,G}(t)$ be the rounded resource loads of an assignment
$x\in E_c^G$. Since $0\le\alpha_{cjk}\le s$, Lemma~\ref{lem:merging} gives
\begin{align}
 \widehat L_{y,G}(t)&\le H_y(t),\label{eq:vector-domination}\\
 \forall x\in E_c^G\ \exists y:\quad
 H_y(t)&\le\widehat L_{x,G}(t)+2s\bigl(\eta W_G(t)+\tau\bigr)\one
 \label{eq:vector-comparison}
\end{align}
at all times, with the same $y$ for all $t$.

\subsection{The dynamic program and its analysis}
Fix a target $T>0$, divide $w_i$, $p_{ijk}$, and $\widehat p_{ijk}$ by
$T$, and reject the target if $\max_tW(t)>rd$; otherwise every group has
total weight at most $rd$. Keep $D$ from~\eqref{eq:parameters} and replace
its other parameters by
\begin{equation}\label{eq:parameters-general}
 \eta=\frac{\varepsilon}{4s\,rd},\quad
 \tau=\frac{\varepsilon}{8sKD},\quad
 Q=\ceil{(1+128/\eta)^2}\bigl(1+\floor{\log_2(1+2KD)}\bigr),\quad
 \lambda=\frac{\tau}{4Qs}.
\end{equation}
Then $\eta w(G)/\tau\le2KD$ for every group, so $Q$
satisfies~\eqref{eq:block-count}.

The separator tree is built as in Section~\ref{sec:dp}, except that every
node forms its two groups separately for each type, so a node has at most
$2K$ groups. The proof of Lemma~\ref{lem:tree} then shows that at most
$2KD$ groups have an active job at any time and that at most $4KQ$
ancestor merged jobs end inside any region. A state chooses a machine in
$E_c$ for every ancestor merged job of type $c$ that ends inside $P$, and
stores a constant load vector $\kappa\in\lambda\mathbb Z_{\ge0}^{rd}$. Its
reserved load is
\begin{equation}\label{eq:reserved-general}
 R_{jk}(t)=\kappa_{jk}
  +\sum_{\substack{S\text{ ends inside }P,\ t\in[a_S,b_S)\\
  S\text{ on machine }j}}\widehat w_S\,\alpha_{c(S)jk},
\end{equation}
where $c(S)$ is the type of the block $S$. Every resource demand of a
merged job is a multiple of $\lambda$, because $\widehat w_S$ is a multiple
of $\tau/(4Q)$ and every coefficient is a multiple of $1/s$.
Sections~\ref{sec:states} and~\ref{sec:transitions} and
Lemma~\ref{lem:table} carry over with three substitutions: the resource
loads $H_{y_G}$ replace the merged loads, a covering merged job $S$ on
machine $j$ adds $\widehat w_S\alpha_{c(S)jk}$ to $\kappa_{jk}$ for every
$k$ in place of its weight, and $1+3\varepsilon$ replaces
$1+\varepsilon$, also in the admissibility condition
$\kappa\le(1+3\varepsilon)\one$.

The error budget is now $3\varepsilon$. Rounding the coefficients costs
$2\varepsilon$ by Lemma~\ref{lem:types}. Merging costs at most
$2s(\eta W_G(t)+\tau)$ for each group with an active job; as in
Section~\ref{sec:dp}, these costs add up to at most
$2s\eta\,rd+4sKD\tau=\varepsilon$.

\begin{lemma}[Target test for several resources]\label{lem:target-general}
Acceptance at target $T$ returns an original assignment of peak at most
$(1+3\varepsilon)T$. Every target $T\ge\OPT$ is accepted.
\end{lemma}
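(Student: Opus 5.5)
The proof follows that of Lemma~\ref{lem:target}, with the coefficient rounding of Lemma~\ref{lem:types} inserted before the merging step. We work in normalized units, with all of $w_i$, $p_{ijk}$ and $\widehat p_{ijk}$ divided by $T$.

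\emph{First assertion.} Suppose the root state $R=0$ is accepted. Following the stored choices gives an assignment of every merged job to a machine in $E_c$ for its type $c$. By the carried-over Lemma~\ref{lem:table}, with $1+3\varepsilon$ in place of $1+\varepsilon$, we have $\sum_G H_{y_G}(t)\le(1+3\varepsilon)\one$ at every time $t$. Place each job on the machine of its merged job; this machine lies in $E_i'\subseteq E_i$. Then \eqref{eq:vector-domination}, summed over the groups (which partition the jobs, by Lemma~\ref{lem:tree} for the per-type construction), bounds the rounded resource loads by $\sum_G H_{y_G}(t)$. By the second assertion of Lemma~\ref{lem:types}, the original loads are at most the rounded loads. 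Reinserting the zero-weight jobs on $j_i^*$ adds nothing, so the original peak is at most $(1+3\varepsilon)T$.

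\emph{Second assertion.} Suppose $\OPT\le 1$ and let $x$ be an assignment of peak at most $1$.
\begin{itemize}
\item By \eqref{eq:baseline-bound}, $W(t)\le rd$, so the target is not rejected.
\item By Lemma~\ref{lem:types}, there is an assignment $x'$ of the rounded instance, using the sets $E_i'$, with rounded peak at most $1+2\varepsilon$.
\item In each group $G$, all jobs share the type $c$ and $x'$ restricts to $E_c^G$. Apply \eqref{eq:vector-comparison} to this restriction to obtain $y_G$.
\item Summing over the groups, only groups with an active job contribute, because merged jobs of the other groups are inactive at $t$. This gives
\[
 \sum_G H_{y_G}(t)\le \widehat L_{x'}(t)+2s\sum_{G:\,W_G(t)>0}\bigl(\eta W_G(t)+\tau\bigr)\one .
\]
\item Use $\sum_G W_G(t)=W(t)\le rd$ and the fact that at most $2KD$ groups have an active job. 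The error term is then at most $2s\eta rd+4sKD\tau=\varepsilon/2+\varepsilon/2=\varepsilon$.
\end{itemize}
The total is therefore at most $(1+3\varepsilon)\one$. By Lemma~\ref{lem:table}, the root state $R=0$ is accepted.

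The only step that needs care is the claim that the per-type separator tree keeps the structural facts of Lemma~\ref{lem:tree}: the groups partition the jobs, and at most $2KD$ groups are active at any time. The second fact holds because each node contributes at most $2K$ groups, and only the nodes on one root-to-leaf path, at most $D$ of them, can contain an active job. A second point to check is that the $\kappa$ arising from the restriction is admissible with the bound $1+3\varepsilon$, which holds as in Section~\ref{sec:transitions}. The other parts are direct substitutions into the scalar argument.
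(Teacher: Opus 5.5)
Your proposal is correct and follows the paper's proof essentially step for step. For the first assertion you use Lemma~\ref{lem:table}, \eqref{eq:vector-domination} and the domination part of Lemma~\ref{lem:types}. For the second you use the rounded assignment of peak $1+2\varepsilon$ from Lemma~\ref{lem:types}, per-group choices via \eqref{eq:vector-comparison}, the $2KD$ active-group bound, and the split $2s\eta\,rd+4sKD\tau=\varepsilon$, fed back into Lemma~\ref{lem:table}.
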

\begin{proof}
Use normalized units, and argue as for Lemma~\ref{lem:target}: the groups
partition the jobs, the merged jobs of groups without an active job are
inactive, at most $2KD$ groups have one, and every chosen machine lies in
$E_c=E_i'\subseteq E_i$. For an accepted root, follow the stored choices
and place every job on the machine of its merged job. By
Lemma~\ref{lem:table} and~\eqref{eq:vector-domination}, its rounded loads
are at most $(1+3\varepsilon)\one$, and by Lemma~\ref{lem:types} its
original loads are at most its rounded loads. For the second assertion,
suppose $\OPT\le1$. Then $W(t)\le rd$ by~\eqref{eq:baseline-bound}, so the
target is not rejected, and Lemma~\ref{lem:types} gives a rounded
assignment $x$ of peak at most $1+2\varepsilon$. Choosing $y_G$
by~\eqref{eq:vector-comparison} in every group gives
\begin{align}
 \sum_GH_{y_G}(t)
 &\le\sum_G\widehat L_{x,G}(t)
  +2s\sum_{G:\,W_G(t)>0}(\eta W_G(t)+\tau)\one\notag\\
 &\le\bigl(1+2\varepsilon+2s\eta\,rd+4sKD\tau\bigr)\one
 =(1+3\varepsilon)\one,\label{eq:total-error-general}
\end{align}
so Lemma~\ref{lem:table} accepts the root.
\end{proof}

A node now tries at most $r^{2KQ}$ assignments per state, and the number of
states is at most
\[
 r^{4KQ}\left(1+\floor{\frac{1+3\varepsilon}{\lambda}}\right)^{rd},
 \qquad\lambda=\frac{\varepsilon}{32Qs^2KD}.
\]
Let $A$ now be $K\eta^{-2}\log(2r)+rd$. Since $Q=O(\eta^{-2}\log(2+KD))$
and $\log(2+KD)\le\log(2+K)+\log(2+D)$, the machine choices contribute at
most $2^{O(A\log(2+K))}(2+D)^{O(A)}$. The rest of the argument of
Section~\ref{sec:running-time} bounds a target test by
$(2+D)^{O(A)}\poly(N)$ bit operations times a computable function of $r$,
$d$, and $\varepsilon$, and~\eqref{eq:absorption} absorbs the
depth-dependent factor. For an explicit bound, $s=O(rd/\varepsilon)$,
$K\le2^r(s^2+1)^{rd}=(rd/\varepsilon)^{O(rd)}$, and
$\eta^{-1}=O((rd/\varepsilon)^2)$. Hence $A=(rd/\varepsilon)^{O(rd)}$, and
one target test takes
\begin{equation}\label{eq:general-time}
 2^{(rd/\varepsilon)^{O(rd)}}N^{O(1)}
\end{equation}
bit operations.

\begin{proof}[Proof of Theorem~\ref{thm:main}]
Compute the types once. Then proceed as in Algorithm~\ref{alg:scheme}, but
with $\mathrm{LB}$ from~\eqref{eq:LB-general}, the targets
$T_\nu=\mathrm{LB}(1+\nu\varepsilon)$ for
$\nu=0,\ldots,\ceil{(rd-1)/\varepsilon}$, rejection when
$\max_tW(t)>rd$, and the parameters~\eqref{eq:parameters-general}. The grid
reaches $rd\,\mathrm{LB}\ge\OPT$. So by Lemma~\ref{lem:target-general}, as
in the proof of Theorem~\ref{thm:scalar}, the first accepted target gives
peak at most $(1+3\varepsilon)(1+\varepsilon)\OPT\le(1+7\varepsilon)\OPT$.
Running the algorithm with $\varepsilon/7$ in place of $\varepsilon$ gives
a $(1+\varepsilon)$-approximation. There are $O(rd/\varepsilon)$ targets,
and the tree and the merged jobs are rebuilt for each of them.
So~\eqref{eq:general-time} gives the running time of
Theorem~\ref{thm:main} with $f(r,d,1/\varepsilon)=2^{(rd/\varepsilon)^{O(rd)}}$.
\end{proof}

\paragraph*{Further directions.}
Theorem~\ref{thm:scalar} and Proposition~\ref{prop:lower-bounds} leave the
dependence on the accuracy open. For several resources, reducing the number
of coefficient types would improve the function $f$ of
Theorem~\ref{thm:main}. Allowing a job to have several separate active
intervals introduces a further coupling: one job may meet several child
regions while avoiding their separating slabs.

\FloatBarrier
\section*{AI Usage Disclosure}
AI agents were used for exploratory discussions during the development of the
proofs and to assist with drafting and revising the manuscript. The author
takes full responsibility for the paper's contents and correctness.

\bibliographystyle{plainurl}
\bibliography{references}
\clearpage
\appendix
\section{The partition for one order}\label{app:rounding}
\begin{proof}[Proof of Lemma~\ref{lem:one-order}]
Make every job of weight greater than $\xi/4$ a set of its own. In each
maximal run of the other jobs, scan in order and close a set as soon as
its weight reaches $\xi/4$; keep a nonempty unfinished last set of the
run. A closed set weighs less than $\xi/2$, since it weighed less than
$\xi/4$ before its last job, which weighs at most $\xi/4$; an unfinished
set weighs less than $\xi/4$. The heavy jobs and the closed sets are disjoint and weigh at
least $\xi/4$ each, so there are at most $4w(G)/\xi$ of them. There is at
most one unfinished set per run and at most one more run than heavy jobs,
so there are at most $1+8w(G)/\xi$ sets.

Fix $x\in E^G$. We choose $y$ set by set, in the order. At the end of
each set, let $X_j$ and $Y_j$ be the weights that $x$ and $y$ have
assigned to $j$ so far; we keep
\[
 Y_j-X_j<\xi/2\qquad(j\in E).
\]
This holds initially. A single job keeps its machine from $x$. For a set
$B$ with several jobs, of weight $w(B)<\xi/2$, let $X_j^+$ include the
assignments of $x$ through the end of $B$, while $Y_j$ still stops before
$B$. Both assignments have placed the same weight before $B$, so
\[
 \sum_{j\in E}(X_j^+-Y_j)=w(B)>0.
\]
Choose a machine $j\in E$ with $Y_j<X_j^+$ and put all of $B$ there. Its
new difference is $Y_j+w(B)-X_j^+<w(B)<\xi/2$. On every other machine, $Y_j$ is
unchanged and $X_j$ does not decrease, so the bound holds at the end of
$B$.

A prefix that ends inside a set $B$ with several jobs adds less than
$\xi/2$ to the $y$-load of any machine after the end of the previous set,
and a nonnegative amount to its $x$-load. So $w^y_j(F)-w^x_j(F)<\xi$ for
every prefix $F$. The partition comes from one scan of exact prefix sums.
\end{proof}

\section{Exact arithmetic}\label{app:arithmetic}
All input rationals are written in binary, and the product of their
denominators has $O(N)$ bits. Every sum of input demands or scalar weights
is therefore a polynomial-bit numerator over a shared denominator, and
computing $q_{ij}$, $w_i$, $W$, and $\mathrm{LB}$ takes polynomially many
exact comparisons and additions. The integer $s=\ceil{rd/\varepsilon}$ has
$O(N)$ bits. A retained coefficient is stored as the integer
$\ceil{sp_{ijk}/w_i}\in\{0,\ldots,s^2\}$; comparing these integers and
the sets $E_i'$ identifies the types in polynomial time, and there are at
most $n$ of them. Every target $T_\nu$, in~\eqref{eq:target-grid} and in
the proof of Theorem~\ref{thm:main}, has polynomial bit length, and the
definitions of $\eta$, $\tau$, $Q$, $\lambda$, and the tolerances of
Lemma~\ref{lem:two-orders} add products and quotients of polynomial-bit
numbers.

For a group, the values $m_i$ are slab values of $W_G$, a class index
follows from comparisons with powers of two, and the partitions of
Lemmas~\ref{lem:one-order} and~\ref{lem:two-orders} come from scans of
exact prefix sums. A rounded block weight is an integer multiple of
$\tau/(4Q)$, and every demand of a merged job is an integer multiple of
$\lambda$. The dynamic program thus stores every coordinate of $\kappa$ as
an integer of at most $(1+\varepsilon)/\lambda$, or $(1+3\varepsilon)/\lambda$
for several resources, and restriction uses only integer addition.
Capacity tests, restrictions, and comparisons of states take polynomially
many operations on polynomial-bit numbers per slab, merged job, and
coordinate. Stored choices and child pointers reconstruct
one eligible machine for every job.

\section{Lower bounds for fixed machines}\label{app:hardness}
\newenvironment{labelledlist}{\begin{list}{}{\setlength{\leftmargin}{2.8em}%
  \setlength{\labelwidth}{2.4em}\setlength{\labelsep}{0.4em}%
  \setlength{\itemsep}{1pt}\setlength{\topsep}{3pt}}}{\end{list}}
\begin{proposition}\label{prop:lower-bounds}
For every fixed $r\ge2$, deciding whether the peak can be at most a given
threshold is strongly NP-complete, even with integer endpoints and weights
in $\{1,2\}$. Hence there is no FPTAS unless P$=$NP. Under the Exponential
Time Hypothesis (ETH), there is no deterministic
$(1+\varepsilon)$-approximation in $2^{o(1/\varepsilon)}\poly(N)$ time.
The analogous statement holds for randomized algorithms with success
probability at least $2/3$ under randomized ETH.
\end{proposition}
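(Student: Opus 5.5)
We reduce from a strongly NP-hard problem with unary numbers, such as 3-Partition, or more conveniently from a bin-packing-style problem with a fixed number of bins. The reduction encodes each large integer in unary through the time line instead of through the weights. The classical route for temporary tasks (ARSW prove hardness for $r$ part of the input) does not apply directly, since $r$ is fixed here. So we look for a chain of ``gadget'' intervals whose assignments propagate a choice along time.

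The plan is to reduce from a strongly NP-hard problem on two machines. A natural candidate is \emph{Partition}, but it is only weakly NP-hard. We therefore use that the time dimension supplies unboundedly many coordinates, and reduce from 3-SAT or from a strongly NP-hard vector-partition variant. Concretely, each variable becomes a long interval of weight $2$, whose machine encodes its truth value. Each clause is represented by a slab at which unit-weight jobs and the variable jobs overlap. With threshold $2$ on two machines, the peak constraint at a clause slab fails exactly when all three literals are false.

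Because a variable job is a single interval spanning all its clause slabs, we must control which other jobs are active at each slab. We do this with filler jobs of weight $1$ or $2$ with integer endpoints, which force the residual capacity pattern. Literal polarity is handled by using complementary copies linked through consecutive overlapping intervals. An overlap of two weight-$2$ jobs under threshold $2$ forces them onto different machines, which yields chains of alternating machines. For $r>2$ we add $r-2$ blocker jobs of weight $2$ spanning the whole horizon. Each blocker occupies a machine completely, reducing the instance to two machines. Membership in NP is immediate, and the numbers are bounded by a polynomial, so hardness is strong. The standard argument then excludes an FPTAS: running it with $\varepsilon<1/(2\cdot\text{threshold})$ on integer-weight instances decides the problem.

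For the ETH bound, we reduce from 3-SAT with $m$ clauses, after sparsification. The reduction above is linear, producing $O(m)$ jobs and threshold $2$. A gap between thresholds $2$ and $3$ would give a constant-factor gap, which is too strong to be plausible. Instead, we want a reduction in which the optimum value is $\Theta(m)$ while the instance has size $\poly(m)$. Then a $(1+\varepsilon)$-approximation with $\varepsilon=\Theta(1/m)$ decides satisfiability, and a $2^{o(1/\varepsilon)}\poly(N)$ algorithm would give $2^{o(m)}$ time for 3-SAT. To get this, scale the construction: make all weights multiples of a base and pad with uniform background load $\Theta(m)$ on both machines, using blocker-like jobs present at all slabs. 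Unsatisfiability then raises the peak by $1$ over a threshold of order $m$. The randomized statement follows identically under randomized ETH.

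The main obstacle is the gadget design. Each variable's interval is contiguous, yet it must interact consistently with every clause containing it, without unwanted interactions at intermediate slabs. I would first try ordering the clauses along time and giving each variable-occurrence its own short job. Consistency would then be enforced by chains of weight-$2$ jobs that alternate machines between consecutive occurrences. These chains run in parallel tracks, and the threshold budget must absorb the parallel tracks, which again suggests taking the threshold of order $m$ as in the ETH step.
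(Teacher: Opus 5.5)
\textbf{There is a genuine gap.} Your proposal does not yet contain a reduction, and the one concrete construction it sketches cannot work.

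Take positive integer weights, threshold $2$, and a fixed number $r$ of machines. A yes-instance then has total active weight at most $2r$, so at most $2r$ jobs are active on any slab. A left-to-right sweep over the endpoints that remembers the machines of the active jobs decides such instances in $r^{O(r)}\poly(N)$ time. Equivalently, Theorem~\ref{thm:scalar} with $\varepsilon=1/4$ separates peak $2$ from peak $3$ in polynomial time. So no reduction with a constant threshold is NP-hard unless P$=$NP. You noticed that a $2$-versus-$3$ gap is ``too strong to be plausible'' in the ETH step, but the same observation already rules out your NP-hardness step. The construction also fails on its face: on two machines with threshold $2$, at most two weight-$2$ jobs can be active at any time. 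Hence variables cannot be long weight-$2$ intervals spanning all their clause slabs.

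The real difficulty is the one you defer to your last paragraph. Once the threshold $\theta$ is of order $m$ to make room for parallel tracks, your forcing mechanisms disappear. Two overlapping weight-$2$ jobs may then share a machine, so alternating chains, clause slabs, and weight-$2$ blockers constrain nothing. Rescaling weights or padding with background load does not restore these constraints, and multiples of a base also leave the weight set $\{1,2\}$.

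The missing idea is \emph{tightness}. The paper builds instances in which every slab has active weight exactly $r\theta$: a balanced network of weight-$1$ and weight-$2$ arcs, plus $(r-2)\theta$ unit jobs spanning $[0,L)$ (Lemma~\ref{lem:network}). A peak-$\theta$ assignment must then put exactly $\theta$ on every machine on every slab, so each machine gains and loses equal weight at every endpoint. These conservation laws give local gadgets that work for every $\theta$:
\begin{itemize}
\item a weight-$2$ arc split into two unit arcs forces one color (F1);
\item equal numbers of weight-$2$ arcs entering and leaving a vertex permute colors (F2);
\item these combine into equality and matching gadgets that encode monotone 1-in-3 satisfiability (Lemma~\ref{lem:track-program}).
\end{itemize}
Starting from three-coloring of graphs of maximum degree four gives $\theta=O(n_H)$ with $O(n_H)$ jobs. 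Integrality then yields the gap between $\theta$ and $\theta+1$. Your final deductions with $\varepsilon=1/(2\theta)$, for the FPTAS and ETH claims, are fine once such a construction is in hand.
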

The construction uses scalar weights on identical machines. Throughout
this appendix $\theta$ denotes the threshold of the constructed instance.

\subsection{Balanced networks become interval instances}
A \emph{balanced network} is a directed acyclic multigraph with vertices
$v_0,\ldots,v_L$ in topological order, $L\ge1$, and positive integer arc
weights, such that every internal vertex $v_1,\ldots,v_{L-1}$ has equal
incoming and outgoing weight and the arcs leaving the source $v_0$ weigh
$2\theta$ in total. For $r\ge2$, its \emph{$r$-machine instance} has, for
each arc $(v_a,v_b)$ of weight $w$, one job of weight $w$ with interval
$[a,b)$, and in addition $(r-2)\theta$ jobs of weight $1$ with interval
$[0,L)$.

\begin{lemma}[Balanced networks]\label{lem:network}
In the $r$-machine instance of a balanced network, every slab $[k,k+1)$
with $0\le k<L$ has active weight $r\theta$, so $\OPT\ge\theta$. An
assignment of peak $\theta$ exists if and only if the arcs have a
coloring with $r$ colors in which every internal vertex has, for every
color, equal incoming and outgoing weight, and the arcs of each color
leaving $v_0$ weigh at most $\theta$. If no such coloring exists, then
$\OPT\ge\theta+1$.
\end{lemma}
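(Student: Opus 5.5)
The plan is to prove the three assertions of Lemma~\ref{lem:network} in order: the slab weight, the equivalence with colorings, and the integrality gap.

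\emph{Slab weight.} On a slab $[k,k+1)$ the active arc-jobs are the arcs $(v_a,v_b)$ with $a\le k<b$, that is, the arcs crossing the cut between $\{v_0,\ldots,v_k\}$ and the rest. Since every internal vertex is balanced and the graph is acyclic in topological order, flow conservation shows that the weight crossing each such cut equals the weight leaving $v_0$. This is an induction on $k$: passing from cut $k-1$ to cut $k$ removes the arcs entering $v_k$ and adds those leaving it, which have equal weight. Hence every cut carries $2\theta$. The padding jobs add $(r-2)\theta$ on every slab, giving $W=r\theta$ on each slab. Then \eqref{eq:LB} yields $\OPT\ge\theta$.

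\emph{Equivalence.} Suppose an assignment has peak $\theta$. Since the total active weight is $r\theta$, every machine has load exactly $\theta$ on every slab. Color each arc by its machine. For a fixed color $j$, the load of machine $j$ on slab $k$ is the $j$-colored weight crossing cut $k$ plus the padding weight $\pi_j$ on $j$, which is constant in $k$. The difference between slabs $k$ and $k-1$ equals the $j$-colored outgoing weight minus the incoming weight at $v_k$. This difference must be $0$, so $v_k$ is balanced in color $j$. On slab $0$, the load of machine $j$ is its color-$j$ weight out of $v_0$ plus $\pi_j$, which equals $\theta$, so that color-$j$ weight is at most $\theta$.

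Conversely, given such a coloring, put the arcs of color $j$ on machine $j$. By per-color conservation, machine $j$ carries a constant load $c_j\le\theta$ on every slab, where $c_j$ is the color-$j$ weight leaving $v_0$, and $\sum_j c_j=2\theta$. It remains to distribute the $(r-2)\theta$ unit padding jobs so that machine $j$ receives $\theta-c_j\ge0$ of them. These amounts are nonnegative integers because the weights are integers, and they sum to $r\theta-2\theta$, so the distribution exists. All loads then equal $\theta$.

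\emph{Gap.} All weights and $\theta$ are integers, so every load is an integer. If $\OPT<\theta+1$, then $\OPT\le\theta$. Combined with $\OPT\ge\theta$, this gives an assignment of peak exactly $\theta$, and hence a coloring by the previous part. Contrapositively, if no coloring exists then $\OPT\ge\theta+1$.

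The main subtlety is the forward direction of the equivalence, which relies on the loads being exactly $\theta$ everywhere and on the padding contribution being slab-independent. This holds because every padding job spans the whole interval $[0,L)$.
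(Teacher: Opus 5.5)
Your proposal is correct and follows the paper's proof essentially step for step. Both arguments use the same four steps: cut-by-cut conservation to get active weight $r\theta$ on every slab, coloring arcs by machine and reading per-color conservation off the constant padding load, filling machine $j$ with $\theta-c_j$ unit jobs for the converse, and integrality of peaks to get $\OPT\ge\theta+1$.
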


\begin{proof}
The arcs active on $[k,k+1)$ are the arcs $(v_a,v_b)$ with $a\le k<b$, that
is, the arcs crossing the cut after $v_k$. Passing from the cut after
$v_{k-1}$ to the cut after $v_k$ removes the arcs entering $v_k$ and adds
those leaving it; with half-open intervals these are exactly the jobs that
end and start at time $k$. By conservation every such cut weighs $2\theta$,
so together with the unit jobs every slab has active weight $r\theta$, and
no job is active outside $[0,L)$. A peak of at most $\theta$ therefore puts
exactly $\theta$ on every machine on every slab. Color each arc by the
machine of its job. The unit jobs give each machine a constant load, so
its arc load is constant as well: at time $k$ it changes by the outgoing
minus the incoming weight of its color at $v_k$, which must vanish, and on
$[0,1)$ it is the weight of its color leaving $v_0$, which is therefore at
most $\theta$. Conversely, given such a coloring, put the arcs of each
color $j$ on machine $j$. Their load is the constant weight $f_j\le\theta$
that leaves $v_0$ in color $j$, and $\theta-f_j$ unit jobs fill machine $j$
to exactly $\theta$; this uses $(r-2)\theta$ unit jobs, since the $f_j$ sum
to $2\theta$. Integer weights give integer peaks, so without such a
coloring $\OPT\ge\theta+1$.
\end{proof}

\subsection{Weight-two tracks and gadgets}
All networks below have arc weights $1$ and $2$. A \emph{track} is a
directed path of weight-$2$ arcs through the network; its \emph{color} at a
point is the color of its current arc. A \emph{gadget} is a constant number
of consecutive vertices through which some tracks pass; every other track
crosses a gadget on a single arc. In every coloring in which every vertex
conserves the weight of each color, the following hold.
\begin{labelledlist}
\item[(F1)] \emph{Split and merge.} At a vertex with one weight-$2$ arc on
one side and two weight-$1$ arcs on the other, all three arcs have the same
color, since the weight of any color on the weight-$2$ side is $0$ or $2$.
\item[(F2)] \emph{Mixer.} At a vertex with $h\in\{2,3\}$ incoming and $h$
outgoing weight-$2$ arcs, conservation says exactly that every color occurs
equally often on both sides: the outgoing colors are an arbitrary
permutation of the incoming ones.
\item[(F3)] \emph{Equality} $\mathrm{Eq}(A,B)$ for two tracks $A$ and $B$
(Figure~\ref{fig:equality-gadget}). Split $A$ into unit arcs $a_1,a_2$ and
$B$ into $b_1,b_2$; merge $a_2,b_1$ into a weight-$2$ arc $c$; split $c$
into $c_1,c_2$; merge $a_1,c_1$ into the continuation of $A$ and $c_2,b_2$
into the continuation of $B$. By (F1) the six vertices give $a_1$ and
$a_2$ the color of $A$, $b_1$ and $b_2$ the color of $B$, and $a_2,c,b_1$
one common color. So the gadget can be colored if and only if $A$ and $B$
have the same color, and then both continuations keep it.
\item[(F4)] \emph{Matching}
$\mathrm{Match}(A_1,\ldots,A_h;B_1,\ldots,B_h)$ for distinct target tracks
$A_i$ and reference tracks $B_i$ outside them, $h\in\{2,3\}$; references may
repeat. It is a mixer on the targets followed by
$\mathrm{Eq}(A_1,B_1),\ldots,\mathrm{Eq}(A_h,B_h)$. By (F2) and (F3) it can
be colored if and only if the targets carry the same multiset of colors as
the references; afterwards each $A_i$ has the color of $B_i$, and the
references are unchanged.
\end{labelledlist}
\begin{figure}[htb]
\centering
\begin{tikzpicture}[x=1.07cm,y=.66cm,>=Stealth,font=\small,
  bd/.style={circle,draw,fill=white,inner sep=0pt,minimum size=1.6mm,line width=.45pt},
  gv/.style={circle,draw,fill=white,inner sep=0pt,minimum size=2.5mm,line width=.45pt},
  edge/.style={-{Stealth[length=1.25mm,width=.95mm]},line width=.45pt},
  vl/.style={font=\footnotesize,fill=white,inner sep=.6pt}]

\node[bd] (eb0n0) at (0.000,1.000) {};
\node[bd] (eb0n1) at (0.000,-1.000) {};
\node[bd] (eb1n0) at (2.550,1.500) {};
\node[bd] (eb1n1) at (2.550,0.500) {};
\node[bd] (eb1n2) at (2.550,-0.500) {};
\node[bd] (eb1n3) at (2.550,-1.500) {};
\node[bd] (eb2n0) at (5.100,1.500) {};
\node[bd] (eb2n1) at (5.100,0.000) {};
\node[bd] (eb2n2) at (5.100,-1.500) {};
\node[bd] (eb3n0) at (7.650,1.500) {};
\node[bd] (eb3n1) at (7.650,0.500) {};
\node[bd] (eb3n2) at (7.650,-0.500) {};
\node[bd] (eb3n3) at (7.650,-1.500) {};
\node[bd] (eb4n0) at (10.200,1.000) {};
\node[bd] (eb4n1) at (10.200,-1.000) {};
\node[gv,] (eg0n0) at (1.275,1.000) {};
\draw[edge] (eb0n0)--(eg0n0);
\draw[edge] (eg0n0)--(eb1n0);
\draw[edge] (eg0n0)--(eb1n1);
\node[gv,] (eg0n1) at (1.275,-1.000) {};
\draw[edge] (eb0n1)--(eg0n1);
\draw[edge] (eg0n1)--(eb1n2);
\draw[edge] (eg0n1)--(eb1n3);
\node[font=\scriptsize,align=center] at (1.275,-2.150) {split both};
\node[gv,] (eg1n0) at (3.825,1.500) {};
\draw[edge] (eb1n0)--(eg1n0);
\draw[edge] (eg1n0)--(eb2n0);
\node[gv,fill=black!12,] (eg1n1) at (3.825,0.000) {};
\draw[edge] (eb1n1)--(eg1n1);
\draw[edge] (eb1n2)--(eg1n1);
\draw[edge] (eg1n1)--(eb2n1);
\node[gv,] (eg1n2) at (3.825,-1.500) {};
\draw[edge] (eb1n3)--(eg1n2);
\draw[edge] (eg1n2)--(eb2n2);
\node[font=\scriptsize,align=center] at (3.825,-2.150) {merge: $u+v=2w$};
\node[gv,] (eg2n0) at (6.375,1.500) {};
\draw[edge] (eb2n0)--(eg2n0);
\draw[edge] (eg2n0)--(eb3n0);
\node[gv,] (eg2n1) at (6.375,0.000) {};
\draw[edge] (eb2n1)--(eg2n1);
\draw[edge] (eg2n1)--(eb3n1);
\draw[edge] (eg2n1)--(eb3n2);
\node[gv,] (eg2n2) at (6.375,-1.500) {};
\draw[edge] (eb2n2)--(eg2n2);
\draw[edge] (eg2n2)--(eb3n3);
\node[font=\scriptsize,align=center] at (6.375,-2.150) {split $w$};
\node[gv,] (eg3n0) at (8.925,1.000) {};
\draw[edge] (eb3n0)--(eg3n0);
\draw[edge] (eb3n1)--(eg3n0);
\draw[edge] (eg3n0)--(eb4n0);
\node[gv,] (eg3n1) at (8.925,-1.000) {};
\draw[edge] (eb3n2)--(eg3n1);
\draw[edge] (eb3n3)--(eg3n1);
\draw[edge] (eg3n1)--(eb4n1);
\node[font=\scriptsize,align=center] at (8.925,-2.150) {merge pairs};
\node[vl,above=3pt] at (eb0n0) {$2u$};
\node[vl,above=3pt] at (eb0n1) {$2v$};
\node[vl,above=3pt] at (eb1n0) {$u$};
\node[vl,above=3pt] at (eb1n1) {$u$};
\node[vl,above=3pt] at (eb1n2) {$v$};
\node[vl,above=3pt] at (eb1n3) {$v$};
\node[vl,above=3pt] at (eb2n0) {$u$};
\node[vl,above=3pt] at (eb2n1) {$2w$};
\node[vl,above=3pt] at (eb2n2) {$v$};
\node[vl,above=3pt] at (eb3n0) {$u$};
\node[vl,above=3pt] at (eb3n1) {$w$};
\node[vl,above=3pt] at (eb3n2) {$w$};
\node[vl,above=3pt] at (eb3n3) {$v$};
\node[vl,above=3pt] at (eb4n0) {$2u$};
\node[vl,above=3pt] at (eb4n1) {$2v$};
\end{tikzpicture}
\caption{The equality gadget $\mathrm{Eq}(A,B)$. An arc labelled $2u$ has weight~2 and color~$u$. The unit arcs of the shaded merge come from different tracks, so conservation forces the colors $u$ and $v$ of $A$ and $B$ to agree; the last two merges restore both tracks with their colors.}\label{fig:equality-gadget}
\end{figure}

\subsection{A track program for monotone 1-in-3 satisfiability}
A \emph{monotone 1-in-3 formula} $\varphi$ has variables
$1,\ldots,n_\varphi$ and $m_\varphi$ clauses, each a set of three
distinct variables; it is satisfied by a truth assignment in which every
clause contains exactly one true variable.
Its network has one track for each of
\[
 R_{\mathrm T},\ R_{\mathrm F};\qquad
 X_i^0,\ Y_i^0\ (i\in[n_\varphi]);\qquad
 X_i^\gamma,\ Y_i^\gamma\ (\gamma\text{ a clause},\ i\in\gamma),
\]
so the number of tracks is
\begin{equation}\label{eq:track-count}
 \theta=2+2n_\varphi+6m_\varphi .
\end{equation}
All tracks leave $v_0$ on weight-$2$ arcs, so the source emits $2\theta$
and this $\theta$ is the threshold of Lemma~\ref{lem:network}. The
gadgets, which together form the \emph{track program}, follow in three
phases, after which every track enters the sink $v_L$:
\begin{labelledlist}
\item[(i)] $\mathrm{Eq}(X_i^0,X_i^\gamma)$ and $\mathrm{Eq}(Y_i^0,Y_i^\gamma)$
for every clause $\gamma$ and every $i\in\gamma$;
\item[(ii)] $\mathrm{Match}(X_i^0,Y_i^0;R_{\mathrm T},R_{\mathrm F})$ for
every $i\in[n_\varphi]$;
\item[(iii)] $\mathrm{Match}(X_i^\gamma,X_j^\gamma,X_k^\gamma;
R_{\mathrm T},R_{\mathrm F},R_{\mathrm F})$ for every clause
$\gamma=\{i,j,k\}$.
\end{labelledlist}
Every vertex conserves weight, so this is a balanced network. It has
$9m_\varphi+2n_\varphi$ equality gadgets and $n_\varphi+m_\varphi$ mixers,
hence $13n_\varphi+55m_\varphi+2$ vertices and $22n_\varphi+90m_\varphi+2$
arcs, of weights $1$ and $2$.

\begin{lemma}[Track program]\label{lem:track-program}
For every $r\ge2$, the network of a monotone 1-in-3 formula has a
coloring with $r$ colors as in Lemma~\ref{lem:network} if and only if the
formula is satisfiable.
\end{lemma}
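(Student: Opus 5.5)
The plan is to prove both directions by tracking the color of every track through the three phases, using only (F1)--(F4) and the source condition of Lemma~\ref{lem:network}. First I record the structural facts. Outside the gadgets, every track is a single weight-$2$ arc, so its color can change only inside a mixer, and mixers occur only inside $\mathrm{Match}$ gadgets. An $\mathrm{Eq}$ gadget never changes the colors of its two tracks, by (F3). A $\mathrm{Match}$ gadget leaves its reference tracks unchanged, by (F4). A vertex conserves every color exactly when it does so inside the gadget containing it; every other track passes that gadget on a single arc, which starts before the gadget's vertices and ends after them. Hence a coloring of the whole network is the same thing as a choice of colors that satisfies (F3) or (F4) at each gadget in sequence, together with the source condition. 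That condition says that each color is used by at most $\theta/2$ tracks at $v_0$, since all $\theta$ tracks leave $v_0$ with weight $2$.

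For the direction from a satisfying assignment to a coloring, I use colors $1$ and $2$, which exist because $r\ge2$. Give $R_{\mathrm T}$ color~$1$ and $R_{\mathrm F}$ color~$2$. Give $X_i^0$ and all $X_i^\gamma$ color~$1$ if $i$ is true and color~$2$ otherwise, and give $Y_i^0$ and all $Y_i^\gamma$ the opposite color.
\begin{itemize}
\item Phase~(i) consists of equalities between tracks of equal color, so (F3) colors it.
\item In phase~(ii), each pair $X_i^0,Y_i^0$ carries the multiset $\{1,2\}$, which equals the multiset of the references $R_{\mathrm T},R_{\mathrm F}$.
\item In phase~(iii), each clause has exactly one true variable, so its three $X^\gamma$ tracks carry $\{1,2,2\}$, matching $R_{\mathrm T},R_{\mathrm F},R_{\mathrm F}$.
\end{itemize}
So (F4) colors every $\mathrm{Match}$. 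At the source, the tracks split into the pair $R_{\mathrm T},R_{\mathrm F}$ and the complementary pairs $(X,Y)$, each pair using colors $1$ and $2$ once. Hence each color is used by exactly $\theta/2$ tracks and weighs exactly $\theta$.

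For the converse, fix a valid coloring and let $c_{\mathrm T},c_{\mathrm F}$ be the colors of $R_{\mathrm T},R_{\mathrm F}$. These colors are constant along the whole track, since the $R$ tracks are never targets. Phase~(i) uses only $\mathrm{Eq}$ gadgets, so every $X_i^\gamma$ enters phase~(iii) with the source color of $X_i^0$, and likewise for $Y$. Phase~(ii) does not touch the $X^\gamma$ tracks.

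The key step, which I expect to be the main subtlety, is to show $c_{\mathrm T}\ne c_{\mathrm F}$. Suppose instead that $c_{\mathrm T}=c_{\mathrm F}=c$. Then (F4) in phase~(ii) forces $X_i^0$ and $Y_i^0$ to enter with color $c$, and this is their source color. Through phase~(i), every $X_i^\gamma$ and $Y_i^\gamma$ also has color $c$ at the source. So all $\theta$ tracks leave $v_0$ in color $c$, with weight $2\theta>\theta$, which violates the source condition. With $c_{\mathrm T}\ne c_{\mathrm F}$, declare $i$ true exactly when $X_i^0$ has source color $c_{\mathrm T}$. Phase~(ii) forces that color to lie in $\{c_{\mathrm T},c_{\mathrm F}\}$. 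Phase~(iii) forces the three source colors of the variables of each clause to form the multiset $\{c_{\mathrm T},c_{\mathrm F},c_{\mathrm F}\}$, so exactly one variable of each clause is true.
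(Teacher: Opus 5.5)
Your proof is correct and follows the paper's argument essentially step for step. The forward direction colors $R_{\mathrm T},R_{\mathrm F}$ and each complementary $X/Y$ pair with two colors, so that exactly $\theta/2$ tracks leave $v_0$ in each color; the converse tracks source colors through phases (i)--(iii), excludes $c_{\mathrm T}=c_{\mathrm F}$ by the source bound $2\theta>\theta$, and reads off the truth assignment from phase (iii). The only differences are cosmetic: you treat the two directions in the opposite order, and you state explicitly that a valid coloring decomposes gadget by gadget, which the paper uses implicitly.
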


\begin{proof}
Suppose such a coloring exists, and compare colors with those in which the
tracks leave $v_0$. Equality gadgets change no color. Phase~(ii) is the only
phase that changes $X_i^0$ or $Y_i^0$, and a clause copy $X_i^\gamma$ is
changed only by the matching of its own clause in phase~(iii). So when
phase~(iii) tests $X_i^\gamma$, it still has its initial color, which by
phase~(i) is the initial color of $X_i^0$. Phase~(ii) makes the initial
colors of $X_i^0,Y_i^0$ the multiset of colors of $R_{\mathrm T},R_{\mathrm F}$.
If $R_{\mathrm T}$ and $R_{\mathrm F}$ had the same color, phases~(i) and~(ii)
would give every track that color, and the arcs of that color leaving
$v_0$ would weigh $2\theta>\theta$. Hence they differ. Call variable $i$
true when $X_i^0$ starts with the color of $R_{\mathrm T}$. Phase~(iii)
then says that every clause contains exactly one true variable.

Conversely, fix a satisfying assignment. Color $R_{\mathrm T}$ red,
$R_{\mathrm F}$ blue, every copy of $X_i$ red exactly when $i$ is true, and
every copy of $Y_i$ with the other of these two colors. One track of every
pair $(X_i^0,Y_i^0)$ or $(X_i^\gamma,Y_i^\gamma)$ is red, so $1+n_\varphi+3m_\varphi=\theta/2$
tracks, of total weight $\theta$, leave $v_0$ red, and the other
$\theta/2$ leave blue. Every equality gadget compares equal colors. The
pairs in phase~(ii), and, since exactly one variable of each clause is
true, the triples in phase~(iii), carry the multisets of colors of their
references. So every mixer can output the colors of its references in
order, and every vertex conserves the weight of each color.
\end{proof}

\subsection{Three-coloring and the lower bounds}
\emph{Three-coloring as a monotone 1-in-3 formula.} For a graph $H$ with
$n_H$ vertices and $m_H$ edges, take variables $X_{u,q}$ for every vertex
$u$ and color $q\in\{0,1,2\}$ and $Z_{uv,q}$ for every edge $uv$ and color
$q$, with the clauses
\[
 \{X_{u,0},X_{u,1},X_{u,2}\}\quad(u\text{ a vertex}),\qquad
 \{X_{u,q},X_{v,q},Z_{uv,q}\}\quad(uv\text{ an edge},\ q\in\{0,1,2\}).
\]
A proper coloring $\chi$ satisfies them with $X_{u,q}$ true exactly when
$\chi(u)=q$, and $Z_{uv,q}$ true exactly when neither $X_{u,q}$ nor
$X_{v,q}$ is. Conversely, the vertex clauses give every vertex one color
and the edge clauses forbid equal colors at the ends of an edge. The
formula $\varphi$ has $n_\varphi=3n_H+3m_H$ variables and
$m_\varphi=n_H+3m_H$ clauses.

\begin{proof}[Proof of Proposition~\ref{prop:lower-bounds}]
\emph{Hardness.} Three-coloring is NP-complete already for graphs of
maximum degree four~\cite{GareyJohnsonStockmeyer1976}. For such a graph
and fixed $r\ge2$, Lemmas~\ref{lem:network} and~\ref{lem:track-program}
applied to its formula give, in polynomial time, an $r$-machine instance
with integer endpoints in $\{0,\ldots,L\}$ and weights in $\{1,2\}$ whose
optimum is $\theta$ if the graph is three-colorable and at least
$\theta+1$ otherwise. An assignment is verified by an endpoint sweep, and
all numbers are polynomially bounded, so the decision problem is strongly
NP-complete~\cite{GareyJohnson1978}.

\emph{No FPTAS.} An FPTAS run with $\varepsilon=1/(2\theta)$ returns, on a
three-colorable input, a peak at most $\theta+1/2$, hence at most $\theta$ by
integrality; on any other input every peak is at least $\theta+1$. Since
$1/\varepsilon=2\theta$ is polynomial in the input, this would decide
three-colorability in polynomial time, so no FPTAS exists unless P$=$NP.

\emph{The ETH bounds.} Maximum degree four gives $m_H\le2n_H$, hence
$n_\varphi\le9n_H$, $m_\varphi\le7n_H$, and $\theta\le60n_H+2$
by~\eqref{eq:track-count}. For fixed $r$ the instance has $O(n_H)$ jobs
with integer endpoints and weights, so $N=O(n_H\log n_H)$. With
$\varepsilon=1/(2\theta)$, a deterministic $(1+\varepsilon)$-approximation
in $2^{o(1/\varepsilon)}\poly(N)$ bit operations would decide
three-colorability of graphs of maximum degree four in
$2^{o(n_H)}\poly(n_H)$ time. Under the Exponential Time
Hypothesis~\cite{ImpagliazzoPaturi2001} no such algorithm
exists~\cite[Lemma~2.2]{Cygan2017}. The randomized Exponential Time
Hypothesis excludes randomized algorithms that decide 3-SAT in time
$2^{o(n)}$ with error probability at most
$1/3$~\cite{DellHusfeldtMarxTaslamanWahlen2014}. The reductions behind
\cite[Lemma~2.2]{Cygan2017} are deterministic, so it excludes such
algorithms for three-coloring graphs of maximum degree four as well. A
randomized $(1+\varepsilon)$-approximation that succeeds with probability
at least $2/3$ would give one: on a three-colorable input it returns, with
probability at least $2/3$, an assignment of peak at most $\theta+1/2$ and
hence $\theta$, no other input has such an assignment, and a returned
assignment is checked by an endpoint sweep. All of this holds for every
fixed $r\ge2$.
\end{proof}

\end{document}